\newtheorem{theorem}{Theorem}[section]
\newtheorem{lemma}[theorem]{Lemma}
\newtheorem{corollary}[theorem]{Corollary}
\newcommand{\N}{\mathbb{N}}
\newcommand{\R}{\mathbb{R}}
\newcommand{\BO}{\mathcal{O}}
\newcommand{\namedref}[2]{\hyperref[#2]{#1~\ref*{#2}}}
\newcommand{\sectionref}[1]{\namedref{Section}{#1}}
\newcommand{\theoremref}[1]{\namedref{Theorem}{#1}}
\newcommand{\lemmaref}[1]{\namedref{Lemma}{#1}}
\newcommand{\corollaryref}[1]{\namedref{Corollary}{#1}}
\newcommand{\algref}[1]{\namedref{Algorithm}{#1}}
\newcommand{\lineref}[1]{\namedref{Line}{#1}}
\newcommand{\equalityref}[1]{\hyperref[#1]{Equality~\eqref{#1}}}
\newcommand{\inequalityref}[1]{\hyperref[#1]{Inequality~\eqref{#1}}}
\begin{document}
\setcounter{tocdepth}{3}

\title{Node-Initiated Byzantine Consensus Without a Common Clock}

\author{Danny Dolev, Hebrew University of Jerusalem, dolev@cs.huji.ac.il\\
Christoph Lenzen, Massachusetts Institute of Technology, clenzen@csail.mit.edu}

\date{}

\maketitle

\begin{abstract}
The majority of the literature on consensus assumes that protocols are jointly
started at all nodes of the distributed system. We show how to remove this
problematic assumption in semi-synchronous systems, where messages delays and
relative drifts of local clocks may vary arbitrarily within known bounds.
Our framework is self-stabilizing and efficient both in terms of communication
and time; more concretely, compared to a synchronous start in a synchronous
model of a non-self-stabilizing protocol, we achieve a constant-factor increase
in the time and communicated bits to complete an instance, plus an additive
communication overhead of $\BO(n\log n)$ broadcasted bits per time unit and
node. The latter can be further reduced, at an additive increase in time
complexity.
\end{abstract}

% \thispagestyle{empty}
% 
% \newpage
% 
% \setcounter{page}{1}

\section{Introduction}
Consensus is a fundamental fault-tolerance primitive in distributed systems,
which has been introduced several decades ago~\cite{pease80}. Given a system of
$n$ nodes, some of which may be faulty and disobey the protocol in an arbitrary
fashion, the problem can be concisely stated as follows. Each node $v$ is given
some input $i(v)$ from some set of possible inputs $I$. A consensus protocol
resilient to $f$ faults must---under the constraint
that at most $f$ nodes are faulty---satisfy that
\begin{compactitem}
\item[\textbf{Termination:}] every  correct node eventually terminates
and outputs a value $o(v)\in I$;
\item[\textbf{Agreement:}] $o(v)=o(w)$ for correct nodes $v,w$ (we thus may talk
of \emph{the} output of the protocol);
\item[\textbf{Validity:}] if $i(v)=i(w)$ for all correct $v,w$, this is also the
output value.
\end{compactitem}
The main optimization criteria are the \emph{resilience} $f$ (\cite{pease80}),
the \emph{running time}, i.e., the time until all nodes terminate
(\cite{Fischer1982}), and the number of messages and bits sent by correct nodes
(\cite{dolev85}).

\subsection*{A Motivating Example}

Two major international banks, $A$ and $B$, are long-standing rivals. The evil
CEO of bank $B$ hires a professional infiltration specialist and hacker to cause
maximal damage to bank $A$. She offers her services to bank $A$ to inspect the
robustness of their systems. Having full access, she learns that bank $A$ has
been very thorough: All data is replicated at multiple locations, and any
changes are done using a state-of-the-art consensus algorithm; if a customer or
employee commits any value, it is encrypted and sent to the different locations
via secure channels, where it serves as input for the consensus routine.

Consequently, she looks for a way to bring the system down. The consensus
algorithm is Byzantine fault-tolerant, i.e., resilient to arbitrary behavior of
a minority of the data centers. However, she knows that such algorithms are
costly in terms of computation and communication. Closer examination reveals
that the bank solved this by restricting the frequency at which consensus is run
and using a synchronous protocol. Every second, an instance of the algorithm is
started that commits the batch of recent changes. Due to these choices, the
infrastructure is capable of dealing with an influx of operations well beyond
the peak loads.

Nonetheless, the spy now knows how to break the system. She takes control of the
external sources the data centers obtain their timing information from---which
are outside bank $A$'s control---and feeds conflicting time values to the
different data centers. For most operations, the result is denial of service,
but in some instances the lack of synchrony leads to inconsistent commits.
Despite bank $A$'s excellent IT security, the damage is dramatic.

\subsection*{An Alternative Approach}

Could bank $A$ have averted its fate? Clearly, relying on a trusted external
time reference is chancy. On the other hand, dropping timing conditions entirely
would necessitate to use asynchronous consensus protocols, which offer
substantially worse trade-offs between efficiency and reliability. A third
option is to make use of a time reference under control of the bank. However, to
avoid introducing a new point of vulnerability to the system, it needs to be
replicated as well. An obvious choice here is to equip each \emph{node} of
the system (a.k.a.\ data center) with its own clock.

There are two possibilities for leveraging such \emph{hardware clocks}: (i)
using local timing conditions at each node in a consensus algorithm, or (ii) running a
clock synchronization algorithm to compute synchronized logical clocks and
executing synchronous algorithms driven by these clocks.

Regarding (ii), repetitive consensus (or other, tailored solutions, e.g.,
\cite{srikanth87}) can be used to maintain a common clock despite faults.
However, how can initial synchronization be established, without relying on some
external means? Moreover, if a node rejoins the network (after maintenance or a
transient fault), or the system is to recover from a partition, is this possible
without external intervention? The notion of \emph{self-stabilization}
\cite{dijkstra74} covers all these scenarios: a self-stabilizing algorithm must
eventually establish a valid state, no matter the initial state it is started
from.

Byzantine self-stabilizing clock synchronization
algorithms~\cite{dolev11,dolev04,fuegger13} provide solutions for (ii).
There is a close relation to consensus in general and (i) in particular. To the
best of our knowledge, all known algorithms for (ii) employ techniques commonly
found in consensus algorithms, and most of them make explicit use of consensus
protocols as subroutines.

With respect to (i), the question arises on how to decide on \emph{when} to run
consensus. Without any agreement on a global time among the nodes, some other
global reference needs to be established in order to jointly start an instance
of the consensus protocol at all nodes. While a broadcast of a single node could
establish such a reference, this is problematic if the respective node is
corrupted. Even if we add redundancy by allowing for multiple nodes initiating
instances, a mechanism is required to prevent that corrupt nodes overwhelm the
system with too many instances or initialize runs inconsistently.

In particular the latter issue is not be taken lightly, as it entails to
establish agreement among the nodes on whether consensus is to be run or not!
Considering that the vast majority of the literature on consensus assumes that
all nodes in unison start executing an algorithm at a specific
time\footnote{Observe that even asynchronous models assume that all nodes
fully participate at the invoked consensus once they ``wake-up''. This can be
logically mapped to all nodes participating unconditionally from time $0$ on,
much like the unison start of the synchronous algorithms.} we consider it both
surprising and alarming that this issue is not addressed by existing work.

\subsection*{Further Related Work}

The issues that surface when running consensus in practice have been studied
extensively. Researchers distinguished between ``Atomic Broadcast'' that may be
repeatedly executed and ``Consensus'' (\cite{PonzioStrong}) that was considered
as a ``single shot''. Running a synchronous protocol in a semi-synchronous
environment was studied
comprehensively~\cite{Attiya20011096,Attiya:1991,222994,Cristian1995158,Dwork:1988,Ponzio:1991,Herlihy:1998}.
Lower bounds, upper bounds and failure models were presented, and complexity
measures were analyzed. But all previous work explicitly or implicitly assumes
that when consensus is invoked, every node has an input and executes the
protocol to reach the target value that is determined by the set of inputs.
Solving the question we consider using previous work translates to continuously
running consensus on whether to run consensus or not. Thus, such an approach
enables the faulty nodes to cause the correct nodes to get involved in an
unbounded number of invocations.

\subsection*{Contribution}

In this work, we provide a generic solution to (i), where the hardware clock
$H_v$ of node $v$ may run at rates that vary arbitrarily within $[1,\vartheta]$
and message delays may vary arbitrarily within $(0,d)$.
\begin{theorem}\label{theorem:main}
Suppose ${\cal P}$ is a synchronous consensus protocol that tolerates $f<n/3$
faults, runs for $R\in \operatorname{polylog}(n)$ rounds, and guarantees that no
correct node sends more than $B$ bits. Then for each $T\geq 2\vartheta^2d$,
there are a value $S\in \BO(R+T)$ and an algorithm with the following
properties.
\begin{compactitem}
  \item Each correct node $v$ can initiate an instance of ${\cal P}$ at any
  time $t\geq S$, provided that it has not done so at any time $t'<t$ for which
  $H_v(t)-H_v(t')\leq T$.
  \item For any instance that terminates at a time larger than $S$, it holds
  that nodes determine their inputs according to their local view of the system
  during some interval $[t_1,t_1+\BO(1)]$, and terminate during some interval
  $[t_2,t_2+\BO(1)]$, where $t_2\in t_1+\Theta(R)$.
  \item If a correct node initiates an instance at time $t\geq S$, then $t_1=t$.
  \item Each instance for which $t_2\geq S$ satisfies termination, agreement,
  and validity.
  \item Each correct node sends at most $\BO(n^2\log n+nBR/T)$ amortized bits
  per time unit.
  \item The above guarantees hold in the presence of $f$ faulty nodes and
  for arbitrary initial states.
\end{compactitem}
\end{theorem}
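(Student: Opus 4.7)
The plan is to decompose the construction into two layers: a \emph{synchronization/initiation layer} that, from any initial state and after some stabilization time $S$, translates each correct (and at most a bounded number of Byzantine) initiations into a tightly synchronized simultaneous start at all correct nodes, and a \emph{round-simulation layer} that executes the given synchronous protocol ${\cal P}$ on top of this, using only local hardware clocks to advance rounds. The final theorem is then assembled by verifying the list of bullets against these two layers.

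First I would build the initiation layer. The central primitive is a Byzantine-tolerant ``echo'' propagation in the style of Dolev--Strong/Srikanth--Toueg: a node that wishes to initiate broadcasts a request tagged with an instance identifier; any node that sees $f+1$ matching requests from distinct senders re-broadcasts (support), and any node that sees $2f+1$ such requests commits to joining the instance. With $n>3f$ this gives the usual two guarantees---(a) if one correct node initiates, then within one relay step all correct nodes commit, and (b) if no correct node joins, no correct node can be forced to commit---which together yield the desired $[t_1,t_1+\BO(1)]$ input-determination window and the guarantee that a correct initiator's $t_1$ equals its initiation time. Rate limiting is enforced locally using the hardware clock: a node drops requests from a sender (or for an instance identifier) whose last accepted request falls within a $T$-window of $H_v$. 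Since the condition $H_v(t)-H_v(t')\leq T$ and the rate $[1,\vartheta]$ jointly bound the admissible initiations per node per real-time unit, each correct node forwards $\BO(\log n)$ bits per identifier and accepts at most $\BO(1/T)$ identifiers per time unit from each of $n$ sources, giving the $\BO(n^2\log n)$ amortized bits per time unit.

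Next I would build the round-simulation layer. Once a node commits to instance identifier at (local) time $t_1$, it drives ${\cal P}$'s rounds using waits of length $\Theta(\vartheta d)$ on its hardware clock, which is the standard semi-synchronous slowdown ensuring that any message sent in round $r$ by a correct node arrives before any correct node starts round $r+1$. Because all correct nodes start within a $\BO(1)$ real-time window (by the echo primitive) and each round takes $\Theta(\vartheta d)=\Theta(1)$ real time, they also terminate within a $\BO(1)$-window at time $t_2\in t_1+\Theta(R)$, so the synchronous correctness analysis of ${\cal P}$ applies verbatim as long as $t_2>S$. The per-instance bit cost is $\BO(nB)$ across the network, and since an instance completes in $\Theta(R)$ time while the rate limit forces at most $\BO(1/T)$ instances per node per time unit, the amortized consensus traffic per node is $\BO(nBR/T)$, matching the second additive term.

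The main obstacle is \emph{self-stabilization}: the initial state may contain arbitrary pending relays, half-finished simulated runs of ${\cal P}$, and spurious ``accept'' states for identifiers that no correct node ever initiated. The remedy is to timestamp every piece of state with a hardware-clock value and to purge entries older than a fixed window on the order of $R+T$; this guarantees that within time $S=\BO(R+T)$ all garbage has been flushed, after which the above analysis applies unconditionally. The delicate part is showing that this purge interacts correctly with rate limiting and with the echo thresholds: one must argue that even during stabilization no \emph{new} correct initiation at time $t\geq S$ can be suppressed by stale state, and conversely that stale state cannot spawn a post-$S$ committed instance not backed by $f+1$ genuine correct requests. I would handle this by showing an invariant of the form ``any relay or commit action taken at real time $t\geq S$ is justified by messages genuinely sent in $[t-\BO(\vartheta d),t]$,'' proved by induction on the purge horizon. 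Combining this with the initiation-layer guarantees, the round simulation, and the counting arguments above yields all six bullets of the theorem.
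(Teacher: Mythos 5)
Your construction breaks at the case the paper is actually about: a \emph{Byzantine} initiator that is seen inconsistently by the correct nodes. Your initiation layer only claims (a) a correct initiation pulls in everyone and (b) if no correct node joins, nobody commits, and you then let the committed nodes run ${\cal P}$ directly, asserting its synchronous analysis ``applies verbatim''. But with locally enforced rate limiting (each node tests the $T$-window against its \emph{own} drifting clock and its own acceptance history) and with arbitrary initial state, a faulty initiator can drive some correct nodes past the $2f+1$ commit threshold while others drop or never accept the identifier; Bracha-style amplification does not restore totality once correct nodes legitimately refuse to relay. Then ${\cal P}$ is executed by fewer than $n-f$ correct nodes, its resilience and agreement guarantees no longer apply among the participants, non-participants never terminate that instance, and the second and fourth bullets of \theoremref{theorem:main} (which quantify over \emph{all} instances terminating after $S$, not only correctly initiated ones) are not established. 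The paper's design exists precisely to survive this: a continuously broadcast and cross-checked clock-estimate layer (\algref{algo:clock}, \lemmaref{lemma:clock_bad}, \corollaryref{coro:echo}) forces even faulty initiators to present consistent timestamps; the echo decision is \emph{graded} (three confidence levels), with the middle level participating but with input forced to $0$ (\lemmaref{lemma:synchronous_start}); ${\cal P}$ is first converted into a \emph{silent} protocol ${\cal P}_{\!s}$ (\lemmaref{lemma:silent}) so that partially joined instances provably send nothing and output $0$; and ${\cal P}$ itself is never launched directly---one first agrees via a ${\cal P}_{\!s}$ instance on whether to run it (\corollaryref{coro:simulation}). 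Your proposal has no counterpart to silence, graded participation, or the clock-consistency layer, and your stabilization invariant addresses stale state, not this steady-state adversarial behavior.

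A second, independent gap is the round simulation. A fixed local wait of $\Theta(\vartheta d)$ per round does not keep rounds separated over $R$ rounds: with start skew $\BO(d)$ and rates in $[1,\vartheta]$, the skew grows by $\Theta\bigl((1-1/\vartheta)\,W\bigr)$ per round for round length $W$, so after about $1/(\vartheta-1)$ rounds a round-$r$ message can arrive after the receiver has entered round $r+1$; since $R\in\operatorname{polylog}(n)$ and $\vartheta$ is an arbitrary constant, the ``synchronous analysis applies verbatim'' claim fails. The paper therefore resynchronizes \emph{every} round inside \algref{algo:run} with Srikanth--Toueg-style thresholds ($n-f$ to advance, $f+1$ to catch up, a $2\vartheta d$ timeout; see \lemmaref{lemma:consensus_correct})---and that mechanism in turn needs all correct nodes to participate, which is again only guaranteed through the graded-echo/silence machinery above. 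Your bit-counting and purge-based stabilization sketch are in the right spirit, but both rest on these two missing ingredients.
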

This statement can be extended to randomized algorithms that satisfy agreement
and validity \emph{with high probability},\footnote{I.e., probability at least
$1-1/n^c$, where $c$ is an arbitrary constant that is chosen upfront.} and
accepting that inputs are determined within a time window of size $\BO(T)$
enables to decrease the amortized bit complexity per node to $\BO((n^2\log
n+nBR)/T)$ (see \sectionref{app:further}). We remark that our results bear the
promise of improved solutions to (ii).

\section{Model and Problem}

For the purpose of our analysis, we assume that there is a global
\emph{reference time}. Whenever we talk of a time, it will refer to this
reference time from $\R_0^+$, which is unknown to the nodes.

\subsection*{Distributed System}

We model the distributed system as a finite set of $n$ nodes $V=\{1,\ldots,n\}$
that communicate via message passing. Each such message is subject to a delay
from the range $(0,d)$, where $d\in \BO(1)$. Every node can directly communicate
to every other node. The sender of a message can be identified by the receiver.
Up to $f<n/3$ nodes are \emph{Byzantine faulty}, i.e., exhibit arbitrary
behavior. We denote the set of \emph{correct} nodes (i.e., those that are not
faulty) by $G$. Initially, correct nodes' states are arbitrary, and the
communication network may deliver arbitrary messages prior to time $d$.

Each node $v\in V$ is equipped with a hardware clock $H_v:\R^+_0\to \R^+_0$.
(We will show in \sectionref{sec:self-stab} that bounded and discrete clocks are
sufficient, but use the unbounded and continuous abstraction throughout our
proofs.) Clock rates are within $[1,\vartheta]$ (with respect to the reference
time), where $\vartheta-1$ is the \emph{(maximal) clock drift}. For any times
$t<t'$ and correct node $v\in G$, it holds that $t'-t\leq H_v(t')-H_v(t)\leq
\vartheta (t'-t)$. Since hardware clocks are not synchronized, nodes typically
use them to approximately measure timespans by \emph{timeouts}. A timeout can be
either \emph{expired} or \emph{not expired}. When node $v$ \emph{resets} a
\emph{timeout of duration $T\in \R^+$} at time $t$, it is not expired during
$[t,t')$, where $t'$ is the unique time satisfying that $H_v(t')-H_v(t)=T$. The
timeout \emph{expires} when it has not been reset within the last $T$
\emph{local} time (i.e., the last $T$ units of time according to $H_v$).

\textbf{Algorithms and Executions.} Executions are event-based. An \emph{event}
can be a node's hardware clock reaching a certain value, a timeout expiring, or
the reception of a message. Upon an event at time $t$, a node may read its
hardware clock $H_v(t)$, perform local computations, store values, send
messages, and reset timeouts. For simplicity, all these operations
require $0$ time and we assume that no two events happen at the same time. 
%An
%algorithm specifies what actions a node takes, based on the information it can
%access at that time.

\subsection*{Problem Formulation}

We will solve a slightly weaker problem than stated in the abstract, from which
the claimed results readily follow. We are given a deterministic, synchronous,
binary $R$-round consensus protocol ${\cal P}$ resilient to $f<n/3$ faults. The
goal of the \emph{initiation problem} is to enable correct nodes to initiate
independent executions of instances of ${\cal P}$. More precisely:
\begin{compactenum}
\item Each instance carries a label $(v,H_v)\in V\times \R^+_0$.
\item If node $v\in G$ decides to \emph{initiate} an instance, this instance has
label $(v,H_v(t))$.
\item For each instance, each node $w\in G$ decides whether it
\emph{participates} in the instance at some time $t_w$. If it does, we assume
that it has access to some appropriate input $i_w(v,H_v,t_w)\in \{0,1\}$ (which
it may or may not use as the input for this instance), and we require that it
eventually terminates the instance and outputs some value $o_w(v,H_v)\in
\{0,1\}$.
\item If $v\in G$ initializes instance $(v,H_v(t))$ at time $t$, each $w\in G$
decides to participate, $t_w\in t+\Theta(d)$, and it will terminate this
instance at some time from $t+\Theta(R)$.
\item If $w,w'\in G$ participate in instance $(v,H_v)$, then
$o_w(v,H_v)=o_{w'}(v,H_v)$.
\item If all nodes in $G$ participate in an instance and use the same input
$b$, they all output $b$.
\item If $o_w(v,H_v)\neq 0$ for some $w\in G$, then all nodes in $G$
participate in this instance, terminate within a time window $[t^-,t^+]$ of
constant size, and some $u\in G$ satisfies that $i_u(v,H_v,t_u)=o_w(v,H_v)$ for
some $t_u\in t^+-\Theta(R)$.
\end{compactenum}
Compared to ``classic'' consensus, property 4 corresponds to
\emph{termination}, property 5 to \emph{agreement}, and property 6 to
\emph{validity}. Note that validity is replaced by a safety property in case not
all correct nodes participate: property 7 states that non-zero output is
feasible only if no correct node is left out. Finally, property 8 makes sure
that all nodes participate in case a non-faulty node initializes an instance,
therefore ensuring validity for such instances.

Put simply, these rules ensure that each instance $(v,H_v(t))$ initiated by a
correct node behaves like a ``classic'' consensus instance with inputs
$i_w(v,H_v(t),t_w)$, where $t_w\approx t$, which terminates within $\Theta(R)$
time, and roughly simultaneously at all correct nodes. If a faulty node
initializes an instance, the timing conditions are guaranteed only if some
non-faulty node outputs a non-zero value; in this case we are also ensured that
there has been some corresponding input $\Theta(R)$ time in the past, i.e., the
computed output is valid.

Assuming that the ``fallback'' output $0$ never causes any damage, this is
certainly acceptable. In particular, we can use the initiation problem to agree
on whether an instance of an arbitrary (possibly non-binary) consensus protocol
should be jointly started by all correct nodes upon terminating the instance, at
roughly the same time. Setting the inputs $i_v\equiv 1$ (in the initiation
problem) will then ensure that the result is a solution to the task stated in
the abstract and the introduction.

\section{Algorithm}\label{sec:algo}

In this section, we present a simplified version of our algorithm; in
particular, it is not self-stabilizing and has unbounded communication
complexity. We will address these issues in \sectionref{sec:self-stab}.

Our algorithm consists of three main components beside the employed consensus
protocol ${\cal P}$. The first provides each node with an estimate of each other
node's clock, with certain consistency guarantees that apply also to clocks of
faulty nodes. The second uses this shared timing information to enforce clean
initialization of consensus instances within a small window of time. Finally,
the third component provides a wrapper for the consensus protocol that simulates
synchronous execution. Before presenting the protocol, however, we need to
introduce an additional property the employed consensus protocol ${\cal P}$ must
satisfy.

\subsection*{Silent Consensus}
We call a consensus protocol \emph{silent}, if in any execution in which all
correct nodes have input $0$, correct nodes send no messages and output $0$.
Observe that even if not all correct nodes participate, a silent protocol will
end up running correctly and output $0$ at all participating nodes if no correct
node has non-zero input. We show that any consensus protocol can be
transformed into a silent one.
\begin{lemma}\label{lemma:silent}
Any synchronous consensus protocol ${\cal P}$ in which nodes sent at most
$B$ bits can be transformed into a silent synchronous binary consensus protocol
${\cal P}_{\!s}$ with the same properties except that it runs for two more
rounds, during which each node may perform up to $2$ $1$-bit broadcasts.
\end{lemma}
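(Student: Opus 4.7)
My plan is to prepend two ``echo'' rounds to ${\cal P}$ that jointly test, before ${\cal P}$ is ever invoked, whether any correct node holds a non-zero input. In round~$1$ each $v$ with input $1$ performs a single $1$-bit broadcast, and $c^{(1)}_v$ counts the received bits. In round~$2$, $v$ broadcasts one bit iff $c^{(1)}_v\geq f+1$, and $c^{(2)}_v$ counts those bits. The wrapper then decides, based on $c^{(2)}_v$, whether $v$ enters ${\cal P}$ and with which input; nodes that do not enter ${\cal P}$ output $0$ directly. Crucially, the $f+1$ threshold in round~$2$ ensures that a correct node broadcasts in round~$2$ only when at least one correct node had input $1$.

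Silence is then immediate: if every correct input is $0$, no correct node broadcasts in round~$1$, so $c^{(1)}_v\leq f$ at every correct $v$ (only the at most $f$ faulty senders can contribute). Hence no correct node broadcasts in round~$2$ either, $c^{(2)}_v\leq f$ everywhere, ${\cal P}$ is never invoked, and no message is emitted. For validity in the all-ones case, all $n-f\geq 2f+1$ correct nodes broadcast in round~$1$, yielding $c^{(1)}_v\geq 2f+1$ and therefore $c^{(2)}_v\geq 2f+1$ at every correct $v$; all enter ${\cal P}$ with input $1$, and validity of ${\cal P}$ delivers output $1$. Agreement rests on a standard ``boosting'' observation: any correct $u$ with $c^{(2)}_u\geq 2f+1$ has at least $f+1$ correct round-$2$ senders, so every correct $v$ sees $c^{(2)}_v\geq f+1$. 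Consequently, whenever some correct node would feed input $1$ into ${\cal P}$ (which we arrange to trigger at the $2f+1$ threshold), every correct node enters ${\cal P}$ and the agreement guarantee of ${\cal P}$ takes over.

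The main obstacle I anticipate is the intermediate regime in which the number of correct nodes with input $1$ lies in $[1,f]$: the adversary can then distribute the round-$1$ and round-$2$ messages of the faulty nodes so that some correct nodes cross the $f+1$ threshold in round~$2$ while others do not, creating \emph{partial} correct participation in ${\cal P}$---a situation outside ${\cal P}$'s standard model. The resolution, which forms the technical heart of the proof, is to choose the two thresholds so that (i)~any correct node that enters ${\cal P}$ with input $1$ forces every correct node to enter ${\cal P}$ by the boost, and (ii)~in any partial-participation run every correct participant necessarily holds input $0$, so the wrapper can declare its final output to be $0$ whenever the local view does not certify a full-participation input-$1$ execution. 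This preserves agreement across participants and non-participants without relying on ${\cal P}$'s behaviour under partial correct participation. Once this case is dispatched, the bookkeeping is routine: exactly two extra rounds, at most one $1$-bit broadcast per node per added round (hence two in total), and all other complexity parameters of ${\cal P}$ inherited verbatim.
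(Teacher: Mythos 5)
Your overall architecture is the same as the paper's (two prepended echo rounds with thresholds, then a conditional run of ${\cal P}$, non-entrants outputting $0$), but the step you yourself identify as ``the technical heart'' is not actually resolved, and this is a genuine gap. The problem is agreement when correct nodes' local decisions differ, and your proposed fix---``output $0$ whenever the local view does not certify a full-participation input-$1$ execution''---has no workable instantiation within your single-count design. Concretely: if the certification is $c^{(2)}_v\geq 2f+1$ (the natural reading), consider a run where some correct node has $c^{(2)}\geq 2f+1$ and others have $c^{(2)}\in[f+1,2f]$. By your boosting argument all correct nodes do participate, but with \emph{mixed} inputs ($1$ at the high-count nodes, $0$ at the middle-count nodes), so agreement of ${\cal P}$ only says they compute a common value $o$, which may well be $1$; the middle-count nodes then output $0$ while the high-count nodes output $o=1$, violating agreement. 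If instead certification coincides with participation ($c^{(2)}_v\geq f+1$), then in a partial-participation run a single correct participant (e.g.\ $m=1$ correct round-$2$ sender plus $f$ faulty bits aimed only at it) runs ${\cal P}$ essentially alone among correct nodes; ${\cal P}$ gives no guarantees there, the faulty nodes can drive its output to $1$, and the non-participants output $0$. Your observation (ii)---that in partial runs every correct participant holds input $0$---does not save you, precisely because ${\cal P}$'s validity and agreement do not apply under partial correct participation; knowing the participants' inputs constrains nothing about what they output.

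The paper closes this case with a structure your sketch lacks: participation is decided by the \emph{round-1} count ($\geq f+1$), the input is zeroed unless $\geq n-f$ ones are seen, and---crucially---the final output is forced to $0$, \emph{overriding} ${\cal P}$'s result, at every node that sees $\leq f$ round-2 messages. The gap between $f+1$ and $n-f$ then gives the two facts you need: if any correct node sees $\leq f$ round-2 messages, at most $f$ correct nodes sent in round 2, so every correct node saw $\leq 2f<n-f$ and entered ${\cal P}$ with input $0$ (so, under full participation, validity of ${\cal P}$ pins $o=0$ and the two output rules coincide); and if some correct node does not participate at all, then no correct node transmitted in round 2, so the $\leq f$ override fires at \emph{every} correct node, participants included, making all outputs $0$ regardless of how ${\cal P}$ behaves. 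You need either this two-threshold/override mechanism or an equivalent one; a single threshold on $c^{(2)}$ cannot simultaneously guarantee (a) that a node withholding ${\cal P}$'s output forces all correct ${\cal P}$-inputs to $0$ and (b) that partial participation forces the override everywhere. A secondary omission: to inherit the bound $B$ (and the round bound) you must, as the paper does, have nodes locally abort ${\cal P}$ when it is about to exceed its advertised bit or round budget, since in adversarially driven partial runs ${\cal P}$'s complexity guarantees need not hold; ``inherited verbatim'' is not automatic.
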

\begin{proof}
The new protocol can be seen as a ``wrapper'' protocol that manipulates the
inputs and then each node may or may not participate in an instance of the
original protocol. The output of the original protocol, ${\cal P}$, will be
taken into account only by correct nodes that participate throughout the
protocol, as specified below. In the first round of the new protocol, ${\cal
P}_{\!s}$, each participating node broadcasts its input if it is not $0$ and
otherwise sends nothing. If a node receives fewer than $n-f$ times the value
$1$, it sets its input to $0$. In the second round, the same pattern is applied.

Subsequently, ${\cal P}$ is executed by all nodes that received at least $f+1$
messages in the first round (where any missing messages from nodes that do not
participate are set to an arbitrary valid message by the receiver). If in the
execution of ${\cal P}$ a node would have to send more bits than it would have
according to the known bound $B$, it (locally) aborts the execution
of ${\cal P}$. Likewise, if the running time bound of ${\cal P}$ would
be violated, it aborts as well. Finally, a node outputs $0$ in the new protocol
if it did not participate in the execution of ${\cal P}$, aborted it, or
received $f$ or less messages in the second round, and it outputs the result
according to the run of ${\cal P}$ otherwise.

We first show that the new protocol, ${\cal P}_{\!s}$, is a consensus protocol
with the same resilience as ${\cal P}$ and the claimed bounds on communication
complexity and running time. We distinguish two cases. First, suppose that all
correct nodes participate in the execution of ${\cal P}$ at the beginning of the
third round. As all nodes participate, the bounds on resilience, communication
complexity, and running time that apply to ${\cal P}$ hold in this execution,
and no node will quit executing the protocol before termination. To establish
agreement and validity, again we distinguish two cases. If all nodes output the
outcome of the execution of ${\cal P}$, these properties follow right away since
${\cal P}$ satisfies them; here we use that although the initial two rounds
might affect the inputs of nodes, a node will change its input to $0$ only if
there is at least one correct node with input $0$. On the other hand, if some
node outputs $0$ because it received $f$ or less messages in the second round of
${\cal P}_{\!s}$, no node received more than $2f<n-f$ messages in the second
round. Consequently, all nodes executed ${\cal P}$ with input $0$ and computed
output $0$ by the agreement property of ${\cal P}$, implying agreement and
validity of the new protocol.

The second case is that some correct node does not participate in the execution
of ${\cal P}$. Thus, it received at most $f$ messages in the first round of
${\cal P}_{\!s}$, implying that no node received more than $2f<n-f$ messages in
this round. Consequently, correct nodes set their input to $0$ and will not
transmit in the second round. While some nodes may execute ${\cal P}$, all
correct nodes will output $0$ no matter how ${\cal P}$ behaves. Since nodes
abort the execution of ${\cal P}$ if the bounds on communication or time
complexity are about to be violated, the claimed bounds for the new protocol
hold.

It remains to show that the new protocol is silent. Clearly, if all correct
nodes have input $0$, they will not transmit in the first two rounds. In
particular, they will not receive more than $f$ messages in the first round and
not participate in the execution of ${\cal P}$. Hence correct nodes do not
send messages at all, as claimed.
\end{proof}

\begin{algorithm}[t]\label{algo:clock}
\If{$H_v(t)\!\!\mod 2\vartheta d = 0$}{
  \For{$w\in\{1,\ldots,n\}\setminus \{v\}$}{
    \If{$H_v(t)-R_v^w>(2\vartheta^2+\vartheta) d$}
      {$M_{vww}:=\bot$}\nllabel{line:too_slow}
  }
  broadcast update$(M_{vv})$\nllabel{line:update}
}
\If{received \emph{update}$(M_{w11},\ldots,M_{wnn})$ from node $w$ at time $t$}{
  \If{$H_v(t)-R_v^w < d$ or $M_{www}-M_{vww}\neq 2\vartheta
    d$\nllabel{line:too_fast}}{$M_{vww}:=\bot$}
  \Else{$M_{vw}:=(M_{w11},\ldots,M_{wnn})$}
  \If{$|\{u\in V\,|\,|M_{vww}-M_{vuw}|\leq (2\vartheta^2+4\vartheta)
    d\}|<n-f$}{$M_{vww}:=\bot$}\nllabel{line:check_support}
  \If{$M_{vww}=\bot$}{$R_v^w:=\bot$}
  \Else{$R_v^w:=H_v(t)$}\nllabel{line:reset_R}
}
\caption{Actions of node $v\in V$ at time $t$ that relate to maintaining clock
estimates.}
\end{algorithm}

\subsection*{Distributing Clocks}

The first step towards simulating round-based protocols is to establish a common
timeline for each \emph{individual} node. This can be easily done by a
broadcast, however, such a simple mechanism would give the adversary too much
opportunity to fool correct nodes. Therefore, we require nodes to continuously
broadcast their local clock values, and keep updating the other nodes on the
values they receive. By accepting values only if a sufficient majority of nodes
supports them, we greatly diminish the ability of faulty nodes to introduce
inconsistencies. In addition, nodes check whether clock updates occur at a
proper frequency, as it is known that correct nodes send them regularly. This
approach is more robust than the timing tests used in~\cite{DHSS95} to eliminate
untimely messages sent by faulty nodes.

Nodes exchange their clock values at a regular frequency and relay to  others
the values they have received. To this end, node $v$ maintains memory entries
$M_{vuw}$, $w,u\in V$, where $M_{vuw}=H$ is to be understood as ``$u$ told me
that $w$ claimed to have clock value $H$''. At any time $t$, node $v$ will
either trust the clock value node $w$ claims to have, i.e., its estimate of
$H_w(t)$ is $M_{vww}(t)$, or it will not trust $w$. The latter we express
concisely by $M_{vww}(t)=\bot$, i.e., any comparison involving $M_{vww}(t)$ will
fail. Note that if that happens at any time, it cannot be undone; since $w$
proved to be faulty and we are not concerned with self-stabilization here, $v$
will just ignore $w$ in the future. For simplicity, we set $M_{vvv}(t):=H_v(t)$
for all times $t$ and, to avoid initialization issues, assume that
$M_{vuw}(0)=H_w(0)$ for all $u,v,w\in G$.\footnote{Since the full algorithm is
self-stabilizing, we do not need to worry about initialization in our simplified
setting.} Finally, $v$ stores its local time when it received a clock update
from $w$ in the variable $R_v^w$, in order to recognize $w$ violating the timing
constraints on update messages. The actions $v$ takes in order to maintain
accurate estimates of other's clocks are given in \algref{algo:clock}. The
``broadcast'' in the protocol means sending to all nodes.

\subsection*{Initiating Consensus}

\begin{algorithm}[t]\label{algo:initiate}
\If{$v$ initiates consensus at time $t$}{broadcast init$(H_v(t))$}
\If{received \textnormal{init}$(H_w)$ from $w\in V$ at
time $t$ and $|H_w-M_{vww}(t)|\leq 3\vartheta d$}{broadcast echo($w,H_w$)}
\nllabel{line:echo}
\If{received \textnormal{echo}$(w,H_w)$ from node $u$ at time
$t$ and $|H_w-M_{vww}(t)|\leq \Delta$\nllabel{line:store_cond}}{
  store $(u,\,$echo$(w,H_w))$\\
  \If{$|\{u\in V\,|\,(u,\,$\textnormal{echo}$(w,H_w))\mbox{ stored}\}|\geq f+1$
  and $E_v^w(H_w)$ is expired}{reset $E_v^w(H_w)$}
}
\If{$E_v^w(H_w)$ expires at time $t$}{
  \If{$|\{$stored tuples $(\cdot,\cdot,\textnormal{echo}
  (w,H_w))\}| \geq n-f$}{participate in $(w,H_w)$ with input
  $i_v(w,H_w,t)$} \nllabel{line:trust}
  \Else{participate in $(w,H_w)$ with input $0$}
}
\caption{Actions of node $v\in V$ at time $t$ that relate to initiating
consensus instances. $\Delta$ is a sufficiently large constant that will be
fixed later.}
\end{algorithm}

\algref{algo:clock} forces Byzantine nodes to announce consistent clock values
to most of the correct nodes or be revealed as faulty. In particular, it is not
possible for a Byzantine node to convince two correct nodes to accept
significantly different estimates of its clock.

However, timestamps alone are insufficient to guarantee the consistency of every
execution of the consensus protocol. Even if correct nodes know that a node
claiming to initiate consensus is faulty, they might be forced to participate in
the respective instance because unsuspecting nodes require the assistance of
\emph{all} correct nodes to overcome $f<n/3$ faults. Ironically, it would
require to solve agreement in order for all correct nodes to either participate
or not. This chicken-and-egg problem can be avoided using a gradecast-like
technique, cf.~\cite{feldman97}. If at least $n-f$ nodes send an echo message
(supposedly in response to an initiate message) in a timely fashion
(corresponding to confidence level 2 in gradecast), the initiating node might be
correct. Hence the receiver $w$ participates in the respective instance, with
input determined by $i_w$. If between $f+1$ and $n-f-1$ echo messages are
received (confidence level 1), the node participates (as there might be a
correct node that fully trusts in the instance), but defaults its input value to
``0''. Finally, if $f$ or less echo messages are received (confidence level 0),
it is for sure that no correct node participates with non-zero input and it is
safe to ignore the instance.

For every $w\in V\setminus \{v\}$, $v$ has a timeout $E_v^w(H)$, $H\in \R^+_0$,
of duration $2\vartheta d$, which serves to delay the start of an instance until
all nodes had time to make their decision. \algref{algo:initiate} gives the
pseudocode of the subroutine. We will choose $\Delta$ sufficiently large such
that each correct node waits for all correct nodes' echoes before deciding which
input to use.

\begin{algorithm}[t!]\label{algo:run}
\If{$H_v(t)=H_v(t_v)$}{$H_v^{(1)}:=H_v(t_v)+C$}
\If{received message $(m,i)$ from $u\in V\setminus\{v\}$ at time $t$
  and no tuple $(u,\cdot,i)$ stored\nllabel{line:receive}}{
  store $(u,m,i)$\\
  \If{$|\{(u,m,i)\,|\,(u,m,i)\mbox{ stored}\}|\geq
  n-f$ and $H_v^{(i+1)}=\bot$}
  {$H_v^{(i+1)}:=H_v(t)+2\vartheta d$}\nllabel{line:next}
  \If{$|\{(u,m,i)\,|\,(u,m,i)\mbox{ stored}\}|\geq f+1$
  and $(H_v^{(i)}=\bot $ or $H_v^{(i)}>H_v(t))$}{$H_v^{(i)}:=H_v(t)$}
  \nllabel{line:catch_up}}
\If{$H_v(t)=H_v^{(1)}$}{compute $M_v^{(1)}$ based on
input}\nllabel{line:first}
\If{$H_v(t)=H_v^{(i+1)}$ for $i\leq R-1$}{%
  compute $M_v^{(i+1)}$, where $\exists$ stored tuple $(u,m,i)$ with $m\neq
  \emptyset \Leftrightarrow$ received $m$ from $u$ in round $i$
  \nllabel{line:compute}
}
\If{$H_v(t)=H_v^{(i)}$ for $i\leq R$}{
  \For{$w\in V$}{
    \If{$\exists (m,w)\in M_v^{(i)}$}{send $(m,i)$ to $w$}
    \Else{send $(\emptyset,i)$ to $w$}
  }
}
\If{$H_v(t)=H_v^{(R+1)}$}{
  compute output, where $\exists$ stored tuple $(u,m,R)$ with $m\neq \emptyset
  \Leftrightarrow$ received $m$ from $u$ in round $R$
  \nllabel{line:output}
}
\caption{Actions of $v\in V$ at time $t$ that relate to running instance
$(w,H_w)$ invoked at time $t_v$. $C$ is a sufficiently large constant that will
be fixed later.}
\end{algorithm}

\subsection*{Running Consensus}

Denote by $t_v$ the time when $v$ decides to participate in instance $(w,H_w)$,
and by $M_v^{(i)}$, $i\in \{1,\ldots,R\}$, the messages it needs to send in
round $i$ of the protocol. Note that since \algref{algo:initiate} also specifies
node $v$'s input, it can compute $M_v^{(1)}$ (the messages to send in the first
round of the simulated consensus algorithm) by time $H_v^{(1)}:=H_v(t_v)+C$,
where $C$ is a suitable constant that will be specified later. All messages of
the instance are labelled by $(w,H_w)$ in order to distinguish between
instances. For ease of notation, we omitted these labels in \algref{algo:run}.

Essentially, the algorithm runs the fault-tolerant synchronization algorithm
from \cite{srikanth87} to ensure that the clock drift does not separate the
nodes' estimates of the progression of time during the execution by too much. If
a node can be sure that some correct node performed round $i$ (because it
received $f+1$ corresponding messages), it knows that it can safely do so
himself. To progress to the next round, nodes wait for $n-f$ nodes. Of these
$n-2f\geq f+1$ must be correct and will make sure that others catch up. A
timeout of $2\vartheta d$ guarantees that this information spreads and all
messages of round $i$ can be received before round $i+1$ actually starts. The
``non-messages'' $\emptyset$ are explicitly sent to compensate for missing
messages.

Note that if not all correct nodes participate, the timing bounds stated
above may become violated. However, since the employed protocol is silent and we
made sure that all inputs are $0$ if not all correct nodes participate,
interpreting missing messages as no message being received is sufficient to
ensure a consistent execution outputting $0$ at all nodes in this case.

\section{Analysis}\label{sec:analysis}

\subsection*{Distributing Clocks}

As mentioned earlier, we do not have to worry about correct initialization here,
since the ultimate goal is a self-stabilizing algorithm. To simplify the
following analysis, we may thus assume that at time $0$ each node sends two
consecutive (imagined) zero-delay update messages. This avoids issues in the
proof logic when referring to previous such messages.

First, we show that correct nodes maintain trusted and accurate clock estimates
of each other.
\begin{lemma}\label{lemma:clock_good}
If $v,w\in G$, then at any time $t$ it holds that
% \begin{equation*}
$H_w(t)\geq M_{vww}(t)\geq H_w(t)-3\vartheta d$.
% \end{equation*}
\end{lemma}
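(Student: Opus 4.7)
The plan is an induction over the sequence of events, establishing simultaneously that $M_{vww}(t)\ne\bot$ and that $H_w(t)\geq M_{vww}(t)\geq H_w(t)-3\vartheta d$ for all correct $v,w$ and all $t\geq 0$. The initialization $M_{vww}(0)=H_w(0)$ together with the imagined zero-delay updates at time $0$ provides the base case.

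The upper bound is essentially free: whenever $M_{vww}$ is (re)assigned to a non-$\bot$ value, that value equals the sender's $M_{www}$, which a correct $w$ broadcasts as $H_w(t_s)$ at its send time $t_s$; monotonicity of $H_w$ gives $M_{vww}(t)=H_w(t_s)\leq H_w(t)$. The lower bound reduces to controlling how stale $M_{vww}$ can become. If $t_s<t_s'$ are two consecutive broadcast times of $w$, the algorithm enforces $H_w(t_s')-H_w(t_s)=2\vartheta d$, so $t_s'-t_s\leq 2\vartheta d$, and the message sent at $t_s'$ reaches $v$ by time $t_s'+d$. Provided $v$ accepts this next update, $M_{vww}$ equals $H_w(t_s)$ only on an interval contained in $[t_r,t_s'+d)$, and on that interval $H_w(t)-H_w(t_s)\leq(H_w(t_s')-H_w(t_s))+\vartheta(t-t_s')<2\vartheta d+\vartheta d=3\vartheta d$, exactly the required slack.

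So the crux is to show that $v$ never rejects an update from a correct $w$, nor wipes $M_{vww}$ via the too-slow check in the broadcast block. The too-fast check is immediate from $t_s'-t_s\geq 2d$ and delays being $<d$, which makes the reference-time (and hence $v$-local-time) gap between receives exceed $d$. The too-slow check passes because the same reference-time gap is at most $2\vartheta d+d$, which is at most $(2\vartheta^2+\vartheta)d$ in $v$'s local time. The value check $M_{www}-M_{vww}=2\vartheta d$ is trivially satisfied since $w$ sends honestly.

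The main obstacle is the support check on \lineref{line:check_support}, which demands $n-f$ nodes $u$ with $|M_{vww}-M_{vuw}|\leq(2\vartheta^2+4\vartheta)d$; this is where the induction actually bites. For each correct $u$, $M_{vuw}$ equals $u$'s own estimate $M_{uuw}$ at the send time $\tilde t_{s,u}$ of the most recent $u$-update that reached $v$. By the inductive hypothesis $M_{uuw}(\tilde t_{s,u})\in[H_w(\tilde t_{s,u})-3\vartheta d,H_w(\tilde t_{s,u})]$. A short bookkeeping argument (using send periodicity $\leq 2\vartheta d$ and delay bound $<d$) gives $|t_s-\tilde t_{s,u}|\leq(2\vartheta+1)d$, hence $|H_w(t_s)-H_w(\tilde t_{s,u})|\leq(2\vartheta^2+\vartheta)d$; adding the IH slack of $3\vartheta d$ yields the required $(2\vartheta^2+4\vartheta)d$. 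Since $|G|\geq n-f$, the check passes and the induction closes.
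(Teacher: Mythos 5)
Your proposal is correct and takes essentially the same route as the paper: the paper's minimal-counterexample argument (the first time some correct $v$ would set $M_{vww}:=\bot$) is just your event induction in disguise, and both proofs reduce the two-sided bound to showing the three rejection checks never fire for a correct sender, closing the support check in \lineref{line:check_support} via the inductive bound at correct relays $u$ with the same constant accounting $(2\vartheta^2+\vartheta)d+3\vartheta d=(2\vartheta^2+4\vartheta)d$. If anything, your write-up is slightly more careful about the staleness bookkeeping and the exact-increment check than the paper's own proof.
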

\begin{proof}
Node $w$ sends a clock update at least every $2\vartheta$ local time. Since
messages are delayed by at most $d$ time units, the clock of $w$ will proceed by
at most $\vartheta d$ until such a message is received. Recall that we assume
that $M_{vww}(0)=H_w(0)$. Thus, it is sufficient to show that $w$ never sets
$M_{vww}:=\bot$, implying that it always sets $M_{vww}$ to a value from
$(H_w(t)-3\vartheta d,H_w(t)]$ before $M_{vww}(t)=H_w(t)-3\vartheta d$ becomes
satisfied.

Assume for contradiction that $t$ is the minimal time when some node $v\in G$
sets $M_{vww}:=\bot$ for some node $w\in G$. The clock of $v$ proceeds by at
most $(2\vartheta+1)d$ between consecutive updates from $w$. Together with the
assumption that $R_v^w(0)=H_v(0)$, this shows $v$ cannot execute
\lineref{line:too_slow} of \algref{algo:clock} at time $t$. Similarly, since
nodes send clock updates every $2\vartheta d$ local time (i.e., at most $2d$
real time apart) and messages are delayed by at most $d$, $v$ cannot set
$M_{vww}:=\bot$ according to \lineref{line:too_fast} of the algorithm at time
$t$. This leaves \lineref{line:check_support} as remaining possibility. We claim
that $|M_{vww}-M_{vuw}|\leq H_w(t)-H_w(0)\leq (2\vartheta^2+4\vartheta) d$ for
all $u\in G$. Given that $|G|\geq n-f$, from this claim we can conclude that $v$
does not execute this line at time $t$ either, resulting in a contradiction.

Consider the most recent update message (before time $t$) $w$ received from a
node $u\in G$. It has been sent at a time $t_u\geq t-2\vartheta d+d$, as
otherwise the next update message would already have arrived. Since $t>t_u$ is
minimal, we have that $H_u(t_u)\geq M_{uww}(t_u)\geq H_u(t_u)-3\vartheta d$. We
conclude that
\begin{equation*}
|M_{vww}(t)-M_{vwu}(t)|\leq |H_w(t)-H_w(t_u)|+|H_w(t_u)-M_{uww}(t_u)|\leq 
(2\vartheta^2+4\vartheta)d,
\end{equation*}
as claimed. By the previous observations, this completes the proof.
\end{proof}

The next lemma shows that the employed consistency checks force faulty nodes to
present reasonably similar clock estimates to different correct nodes.
\begin{lemma}\label{lemma:clock_bad}
Suppose that $v,w\in G$, $u\in V$, $t_w\geq t_v$, and
$M_{vuu}(t_v)\neq \bot \neq M_{wuu}(t_w)$. Then it holds that
% \begin{equation*}
$M_{wuu}(t_w)-M_{vuu}(t_v)\in 
[2(t_w-t_v)/(2\vartheta+3)-\BO(d),2\vartheta(t_w-t_v)+\BO(d)]$.
% \end{equation*}
\end{lemma}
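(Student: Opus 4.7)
The plan is to split
\[
M_{wuu}(t_w)-M_{vuu}(t_v)=\bigl(M_{wuu}(t_w)-M_{wuu}(t_v)\bigr)+\bigl(M_{wuu}(t_v)-M_{vuu}(t_v)\bigr)
\]
and bound the two pieces separately: the first by a rate analysis at the single correct node $w$, the second by a cross-node consistency argument based on the support check. All technical work is essentially packaged into these two bounds, after which the statement follows by addition.

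For the first piece, the key observation is that \lineref{line:too_fast} of \algref{algo:clock} enforces two invariants on $w$'s sequence of accepted updates for $u$: every successful update increments the stored estimate by exactly $2\vartheta d$, and consecutive accepted updates lie at least $d$ local time apart at $w$. Dually, \lineref{line:too_slow}, executed every $2\vartheta d$ local time, forbids the local-time gap from exceeding $(2\vartheta^2+\vartheta)d$ by more than one check period, so the gap is at most $(2\vartheta^2+3\vartheta)d$. Since $w\in G$ has clock rate in $[1,\vartheta]$, the real-time gap between consecutive accepted updates lies in $[d/\vartheta,(2\vartheta^2+3\vartheta)d]$. Counting updates in $(t_v,t_w]$, multiplying by the per-update increment $2\vartheta d$, and absorbing partial gaps at the endpoints into an $\BO(d)$ slack yields a bound on $M_{wuu}(t_w)-M_{wuu}(t_v)$ of the form required by the lemma.

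For the second piece, let $r_v,r_w\leq t_v$ denote the most recent times at which $v$ and $w$ accepted an update for $u$; the too-slow check guarantees $t_v-r_v,\;t_v-r_w\in\BO(d)$, and values do not change between updates. \lineref{line:check_support} ensures that, at each of $r_v$ and $r_w$, at least $n-f$ nodes' reported estimates lie within $(2\vartheta^2+4\vartheta)d$ of the newly accepted value. Since $n>3f$, the two support sets share at least $n-3f\geq 1$ correct node $u^*$. The report from $u^*$ held by $v$ at $r_v$ equals $M_{u^*uu}$ at some earlier instant $s_v\in[r_v-\BO(d),r_v]$, and similarly for $w$ at some $s_w\in[r_w-\BO(d),r_w]$; both reports anchor $M_{vuu}(r_v)$ and $M_{wuu}(r_w)$ to within $\BO(d)$ of $M_{u^*uu}(s_v)$ and $M_{u^*uu}(s_w)$ respectively. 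Applying the one-node rate bound from the first piece to $u^*$ across the $\BO(d)$-wide window between $s_v$ and $s_w$ controls the drift of $M_{u^*uu}$ by $\BO(d)$, so $\lvert M_{wuu}(t_v)-M_{vuu}(t_v)\rvert\in\BO(d)$.

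The main obstacle is avoiding the apparent circularity of the second piece, since bounding the drift of $M_{u^*uu}$ at a single correct node across $[s_v,s_w]$ looks like a special case of the very statement being proved. The resolution is that the rate bound from the first piece is a genuinely single-node statement that never references a cross-node comparison, so it may legitimately be invoked for $u^*$; and because $s_w-s_v\in\BO(d)$, the resulting drift is $\BO(d)$, comfortably absorbed by the additive error term. Adding the bounds from the two pieces then delivers the claimed two-sided estimate.
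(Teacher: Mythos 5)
Your proposal is correct and follows essentially the same route as the paper's proof: both arguments combine (a) a cross-node $\BO(d)$ bound obtained from a correct node in the intersection of the two $(n-f)$-sized support sets of \lineref{line:check_support}, whose relayed value anchors $M_{vuu}$ and $M_{wuu}$ near time $t_v$, with (b) single-node rate bounds coming from the exact $2\vartheta d$ increments and the spacing enforced by Lines~\ref{line:too_slow} and~\ref{line:too_fast} of \algref{algo:clock}; your splitting at the common time $t_v$ rather than at the last reception times as in the paper is only a bookkeeping difference. One quibble: taking the real-time spacing of accepted updates as $d/\vartheta$ yields an upper rate of $2\vartheta^2$ rather than the stated $2\vartheta$, but the paper's own proof incurs exactly the same slack (it treats the $d$ local-time spacing as real time), and only the lower bound and the $\BO(d)$ terms are used downstream (e.g., in \corollaryref{coro:echo}), so this is immaterial.
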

\begin{proof}
Consider the most recent update messages $v$ and $w$ received until time $t_v$,
at times $t_v',t_w'\in (t_v-(2\vartheta+1)d,t_v]$. Due to the prerequisites that
$M_{vuu}(t_v)\neq \bot \neq M_{wuu}(t_w)$, neither does $v$ set $M_{vuu}:=\bot$
at time $t_v'$ nor does $w$ set $M_{wuu}:=\bot$ at time $t_w'$. Hence,
\begin{equation*}
\exists X_v\subseteq V: |X_v|\geq n-f \wedge \forall x\in X_v:
|M_{vuu}(t_v')-M_{vxu}(t_v')|\leq (2\vartheta^2+4\vartheta)d,
\end{equation*}
and there is a set $X_w$ satisfying the same condition for $w$ at time $t_w'$.
Clearly, $|X_v\cap X_w|\geq n-2f\geq f+1$. Hence, there is a correct node
$g\in X_v\cap X_w\cap G$.

Denote by $t_r^v,t_r^w\in (t_v-(2\vartheta+1)d,t_v]$ the receiving times of the
latest update messages from $g$ that $v$ and $w$ received until time $t_v$ and
by $t_s^v,t_s^w\in (t_v-(2\vartheta+1)d,t_v]$, respectively, their
sending times. Note that there never is more than one update message from $g$ in
transit. Therefore, either $t_s^v=t_s^w$ or one of the messages received by $v$
and $w$ directly precedes the other one. Thus, $|H_g(t_s^v)-H_g(t_s^v)|\leq
2\vartheta d$. Within $2\vartheta d$ time, $g$ receives at most $\lceil
2\vartheta\rceil$ update messages from $u$, each of which must increases its
estimate $M_{guu}$ of $u$'s clock by exactly $2\vartheta d$, as otherwise it
would set $M_{guu}:=\bot$. We conclude that $|M_{guu}(t_s^v)-M_{guu}(t_s^w)|\in
\BO(d)$, yielding
\begin{eqnarray*}
&&|M_{vuu}(t_r^v)-M_{wuu}(t_r^w)|\\
&\leq &  |M_{vuu}(t_r^v)-M_{vgu}(t_r^v)|
+|M_{guu}(t_s^v)-M_{guu}(t_s^w)|
+|M_{wgu}(t_r^v)-M_{wuu}(t_r^w)| \in \BO(d).
\end{eqnarray*}

It remains to bound the progress of the estimates $M_{vuu}$ and $M_{wuu}$ during
$[t_r^v,t_v]$ and $[t_r^w,t_w]$, respectively. Again, $v$ must not set
$M_{vuu}:=\bot$ during $[t_r^v,t_v]$ and $w$ must not set $M_{wuu}:=\bot$
during $[t_r^w,t_w]$. Due to the fact that $v$ and $w$ accept update messages
without losing trust in $u$ only if they arrive at least $d$ and at most
$(2\vartheta^2+3\vartheta)d$ time apart, we can bound
\begin{equation*}
\frac{2(t_v-t_r^v)}{2\vartheta+3}-2\vartheta d\leq
M_{vuu}(t_v)-M_{vuu}(t_r^v) \leq
2\vartheta(t_v-t_r^v)+2\vartheta d
\end{equation*}
and
\begin{equation*}
\frac{2(t_w-t_r^w)}{2\vartheta+3}-2\vartheta d\leq
M_{wuu}(t_w)-M_{vuu}(t_r^w) \leq
2\vartheta(t_w-t_r^w)+2\vartheta d
\end{equation*}

Putting all bounds together, we obtain
\begin{eqnarray*}
M_{wuu}(t_w)-M_{vuu}(t_v)
\!\!&=&\!\!  M_{wuu}(t_w)-M_{wuu}(t_r^w)+M_{wuu}(t_r^w)-M_{vuu}(t_r^v)
-(M_{vuu}(t_v)-M_{vuu}(t_r^v))\\
\!\!&\in&\!\!
\left[\frac{2(t_w-t_v)}{2\vartheta+3}-\BO(d), 2\vartheta(t_w-t_v)+\BO(d)\right],
\end{eqnarray*}
concluding the proof.
\end{proof}

\subsection*{Initiating Consensus}

Having set up the bounds on the differences of clock estimates among correct
nodes, we can discuss their mutual support in invoking consensus. First, we show
that correct nodes can initiate instances unimpaired by the consistency checks
of \algref{algo:initiate}.

\begin{lemma}\label{lemma:initiate_good}
If $v\in G$ initiates a consensus instance at time $t$ and $\Delta\geq
3\vartheta d$, then each node $w\in G$ participates at some time $t_w\in
[t+2d,t+\BO(d)]$ with input value $i_w(v,H_v(t),t_w)$.
\end{lemma}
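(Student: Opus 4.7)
The plan is to trace the init--echo cascade triggered by $v$'s invocation and check the clock- and timeout-related conditions at each step. Since $v\in G$, each correct node $w$ receives \textnormal{init}$(H_v(t))$ at some time in $(t,t+d]$. The check $|H_v(t)-M_{wvv}(\cdot)|\leq 3\vartheta d$ on \lineref{line:echo} passes by \lemmaref{lemma:clock_good} together with the drift bound $H_v(t+d)-H_v(t)\leq \vartheta d$, so every $w\in G$ broadcasts \textnormal{echo}$(v,H_v(t))$. Hence by time $t+2d$ each correct node $w'$ has received all $|G|\geq n-f$ correct echoes. The storage condition $|H_v(t)-M_{w'vv}(t'')|\leq \Delta$ at receipt time $t''\leq t+2d$ is controlled by the same reasoning: $M_{w'vv}(t'')$ differs from the true $H_v(t'')$ by at most $3\vartheta d$ and $H_v$ itself has advanced by at most $2\vartheta d$ since $t$, so $\Delta\geq 3\vartheta d$ suffices and at least $n-f$ echo tuples naming $(v,H_v(t))$ are stored at $w'$ by time $t+2d$.

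Next I would control the timing of $E_{w'}^v(H_v(t))$. No correct node sends an echo before receiving $v$'s init, so the first correct echo arrives at $w'$ strictly after $t$; because at most $f$ faulty echoes can precede it, the $(f+1)$-th stored echo (which triggers the reset) occurs at some real time $t_r\in(t,t+2d]$. The timeout has local-time duration $2\vartheta d$, and hardware clocks run at rates in $[1,\vartheta]$, so its expiry time $t_{w'}$ lies in $[t_r+2d,\,t_r+2\vartheta d]\subseteq[t+2d,\,t+\BO(d)]$.

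Finally, because $t_{w'}\geq t+2d$, the $\geq n-f$ correct echo tuples already stored by time $t+2d$ remain available at expiry, so the threshold on \lineref{line:trust} is met and $w'$ participates in $(v,H_v(t))$ with input $i_{w'}(v,H_v(t),t_{w'})$, as claimed. The main bookkeeping obstacle is the hybrid reasoning between local-clock durations and real time when bounding the reset and expiry of $E_{w'}^v(H_v(t))$, plus confirming that $\Delta\geq 3\vartheta d$ absorbs both the gap in \lemmaref{lemma:clock_good} and the up to $2\vartheta d$ of drift accrued between $t$ and the echo's receipt; beyond this, the statement is a direct consequence of \lemmaref{lemma:clock_good} and straightforward counting of correct echoes.
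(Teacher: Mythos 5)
Your proposal is correct and follows essentially the same route as the paper's own proof: \lemmaref{lemma:clock_good} to pass the checks in \lineref{line:echo} and \lineref{line:store_cond}, the observation that only faulty nodes can produce echoes before time $t$ so the $(f{+}1)$-threshold reset of $E_w^v(H_v(t))$ falls in $(t,t+2d)$, and the $2\vartheta d$ timeout (at clock rates in $[1,\vartheta]$) placing participation in $[t+2d,t+\BO(d)]$ with all $n-f$ correct echoes stored by then, so \lineref{line:trust} yields input $i_w(v,H_v(t),t_w)$. Just be explicit that $\Delta\geq 3\vartheta d$ suffices because the two error sources are one-sided ($M_{wvv}\leq H_v$ always, so the $2\vartheta d$ of drift and the $3\vartheta d$ estimate gap do not add), exactly as in the paper's displayed chain of inequalities.
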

\begin{proof}
By \lemmaref{lemma:clock_good}, we have for all times $t'\in [t,t+2d]$ and nodes
$w\in G$ that 
% \begin{equation*}
$H_v(t)-3\vartheta d \leq M_{wvv}(t) \leq M_{wvv}(t_w)\leq 
H_v(t_w)\leq H_v(t)+2\vartheta d$.
% \end{equation*}
Each node $w\in G$ will receive the init($H_v(t)$) message from $v$ at some time
$t'\in [t,t+d)$ and, as by the above bound the condition in \lineref{line:echo}
is met, broadcast an echo($v,H_v(t)$) message. These messages will be received
at times $t'\in [t,t+2d)$ and, as the condition in \lineref{line:store_cond} is
met, be stored by nodes $w\in G$.

Since only faulty nodes may send an echo($v,H_v(t)$) message earlier than time
$t$ and $|G|\geq n-f > f+1$, the condition for resetting $E_w^v(H_v(t))$ will be
met at each $w\in G$ at some time during $[t,t+2d)$. Therefore, each such node
participates in the instance $(v,H_v(t))$ at some time $t_w\in [t+2d,t+\BO(d)]$.
By this time, $w$ will have received all echo($v,H_v(t)$) messages from nodes in $G$. Thus,
the condition in \lineref{line:trust} is met at time $t_w$ and it will use input
$f_w(t_w)$.
\end{proof}

The following statement summarizes how the guarantees of the clock estimates
control faulty nodes' ability to feed inconsistent information to correct
nodes by timing violations.
\begin{corollary}\label{coro:echo}
If at times $t_v,t_w\in \R^+_0$ nodes $v,w\in G$ send echo$(u,H_u)$, then
$|t_v-t_w|\in \BO(d)$.
\end{corollary}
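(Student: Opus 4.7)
The plan is to combine the trigger condition from Algorithm~\ref{algo:initiate} with \lemmaref{lemma:clock_bad}. The only way a correct node $v$ ever broadcasts echo$(u,H_u)$ is via \lineref{line:echo}, which requires that $v$ just received init$(H_u)$ from $u$ and that $|H_u-M_{vuu}(t_v)|\leq 3\vartheta d$. In particular, this inequality being meaningful implies $M_{vuu}(t_v)\neq \bot$; and the analogous statement holds for $w$ at time $t_w$.

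I would then use the triangle inequality on the two bounds $|H_u-M_{vuu}(t_v)|\leq 3\vartheta d$ and $|H_u-M_{wuu}(t_w)|\leq 3\vartheta d$ to conclude $|M_{vuu}(t_v)-M_{wuu}(t_w)|\leq 6\vartheta d\in \BO(d)$. Since both estimates are not $\bot$, \lemmaref{lemma:clock_bad} applies: assuming without loss of generality $t_w\geq t_v$, it gives
\begin{equation*}
M_{wuu}(t_w)-M_{vuu}(t_v)\geq \frac{2(t_w-t_v)}{2\vartheta+3}-\BO(d).
\end{equation*}
Combining with the previous $\BO(d)$ upper bound on $|M_{wuu}(t_w)-M_{vuu}(t_v)|$ and rearranging (using that $\vartheta\in \BO(1)$) yields $t_w-t_v\in \BO(d)$, as required.

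There is no real obstacle here; the only point to be careful about is that \lemmaref{lemma:clock_bad} assumes both clock estimates are non-$\bot$, which must be argued from the fact that the echo-sending condition in \lineref{line:echo} is a numerical comparison and hence automatically fails whenever $M_{vuu}=\bot$. This is the reason the consistency check in \algref{algo:initiate} was phrased with a numerical comparison rather than with a separate ``trust'' flag, and it is what lets the corollary follow essentially immediately from \lemmaref{lemma:clock_bad}.
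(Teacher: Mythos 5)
Your proof is correct and follows essentially the same route as the paper: the condition in \lineref{line:echo} gives $|M_{vuu}(t_v)-M_{wuu}(t_w)|\leq 6\vartheta d$ via the triangle inequality, and the lower bound of \lemmaref{lemma:clock_bad} then forces $|t_v-t_w|\in \BO(d)$. Your extra remark that the numerical comparison implicitly guarantees the estimates are not $\bot$ (so the lemma's hypotheses hold) is a point the paper leaves tacit, but the argument is the same.
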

\begin{proof}
By \lineref{line:echo} of \algref{algo:initiate}, we have that
$|M_{vuu}(t_v)-M_{wuu}(t_w)|\leq 6\vartheta d$. By \lemmaref{lemma:clock_bad},
$|M_{vuu}(t_v)-M_{wuu}(t_w)|\in \Omega(|t_v-t_w|)$. Hence, $|t_v-t_w|\in
\BO(d)$.
\end{proof}

This entails that correct nodes use non-zero input only when all correct nodes
participate.

\begin{lemma}\label{lemma:synchronous_start}
Suppose that $\Delta\in \BO(d)$ is sufficiently large. If for any $u\in V$,
$v\in G$ participates in a consensus instance labeled $(u,H_u)$ with an input
value different from $0$ at time $t_v$, then each node $w\in G$ participates at
some time $t_w\in [t^-,t^+]$, where $t^+-t^-\in \BO(d)$.
\end{lemma}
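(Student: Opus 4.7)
The plan is to unpack the hypothesis via \algref{algo:initiate}, extract a set of at least $f+1$ correct echo senders whose send times are tightly clustered by \corollaryref{coro:echo}, and then propagate this ``critical mass'' to every $w\in G$.

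More concretely, the fact that $v$ participated with a non-zero input at $t_v$ means that at \lineref{line:trust} the ``$\geq n-f$'' branch was taken, so $v$ has stored at least $n-f$ distinct tuples $(\cdot,\text{echo}(u,H_u))$ by time $t_v$; hence at least $n-2f\geq f+1$ were sent by correct nodes. Let $C\subseteq G$ be such a set of correct senders, and for $g\in C$ let $s_g$ denote the time when $g$ broadcast $\text{echo}(u,H_u)$. Applying \corollaryref{coro:echo} pairwise to $C$---and in fact to every correct node that ever sends $\text{echo}(u,H_u)$---gives a common interval $I$ of length $\BO(d)$ containing all such sending times. Combined with the at-most-$d$ message delay, every $w\in G$ receives all echoes of members of $C$ within an interval $I^+$ of length $\BO(d)$.

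The main step is to verify that each such $w$ actually \emph{stores} these echoes, i.e., that the check in \lineref{line:store_cond} passes at the reception times. Each $g\in C$ passed the check in \lineref{line:echo}, so $|H_u-M_{guu}(s_g)|\leq 3\vartheta d$. Assuming $M_{wuu}(t)\neq\bot$ at the reception time $t$, \lemmaref{lemma:clock_bad} with $|t-s_g|\in \BO(d)$ yields $|M_{wuu}(t)-M_{guu}(s_g)|\in \BO(d)$, and then by the triangle inequality $|H_u-M_{wuu}(t)|\in \BO(d)$, so choosing the constant $\Delta$ large enough makes the check pass. Once $w$ has stored $\geq f+1$ echoes with $E_w^u(H_u)$ expired, the timeout is reset; $2\vartheta d$ local time later---a real duration in $[2d,2\vartheta d]$---it expires and $w$ participates. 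Applying this argument to every $w\in G$ confines all participation times to a common window $[t^-,t^+]$ of size $\BO(d)$.

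I expect the technical obstacle to be ruling out the case $M_{wuu}(t)=\bot$ at the relevant reception times, since then \lineref{line:store_cond} fails regardless of $\Delta$. To handle it I would leverage that $v$'s storage of each of its $n-f$ echoes required $M_{vuu}\neq\bot$ at $v$'s earlier reception times, which via \lineref{line:check_support} of \algref{algo:clock} was supported by relays from at least $f+1$ correct nodes; \lemmaref{lemma:clock_bad} then lets us transport this support to $w$'s view of $u$'s clock, so any $M_{wuu}(t)=\bot$ in the relevant $\BO(d)$-window would contradict the hypothesis that $v$ had trustworthy enough evidence to pass \lineref{line:check_support} at its own recent receptions.
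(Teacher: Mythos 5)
Your main line of argument coincides with the paper's proof of this lemma: extract $n-2f\geq f+1$ correct echo senders from $v$'s $n-f$ stored tuples, cluster their sending times via \corollaryref{coro:echo}, add the message-delay bound, argue that the check in \lineref{line:store_cond} passes for a sufficiently large $\Delta\in\BO(d)$, and finish with the $2\vartheta d$ timeout $E_w^u(H_u)$. You are in fact more explicit than the paper on the middle step (the paper simply asserts, as its step (iii), that the condition in \lineref{line:store_cond} is met at all correct nodes), and your derivation of it from \lineref{line:echo}, \lemmaref{lemma:clock_bad} and the triangle inequality is exactly the intended justification.

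The problem is the last paragraph, where you try to discharge the case $M_{wuu}(t)=\bot$: the claimed contradiction does not follow. That $v$ (or any echo sender) passed \lineref{line:check_support} only constrains the reports \emph{$v$ received}; the support set there ranges over all of $V$, so it contains $u$ itself and up to $f$ faulty nodes, and it need not contain $w$. Concretely, a faulty $u$ can once send a single untimely or inconsistent update to one correct node $w$ alone, triggering \lineref{line:too_fast}, so that $M_{wuu}=\bot$ permanently (in \algref{algo:clock} loss of trust is irrevocable), while every other correct node together with $u$ itself still supplies $\geq n-f$ support everywhere, so no other correct node ever distrusts $u$ and all of them can later pass \lineref{line:echo} and \lineref{line:store_cond}. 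In that situation \lemmaref{lemma:clock_bad} is simply inapplicable to $w$ (its hypothesis $M_{wuu}\neq\bot$ fails), \lineref{line:store_cond} fails at $w$ no matter how large $\Delta$ is, and $w$ never resets $E_w^u(H_u)$; so "transporting the support to $w$" cannot work as sketched. To be fair, the paper's own proof does not treat this case either---it tacitly assumes every correct node still trusts $u$'s clock when the echoes arrive---so your instinct that this is the delicate spot is sound, but closing it requires a genuinely different mechanism or assumption (e.g., letting nodes store echoes and join on the strength of $f+1$ echoes from others irrespective of their own trust in $u$), not the support-counting argument you propose.
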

\begin{proof}
Since $v$ participates in the instance with non-zero input, it stores at least
$n-f$ tuples $(x,$\,echo$(u,H_u))$. At least $n-2f\geq f+1$ of these correspond to
echo$(u,H_u)$ messages sent by correct nodes. Since $v$ participates at time
$t_v$, it received one of these messages at some time $t_v-\Theta(d)$. By
\corollaryref{coro:echo}, all such messages sent by correct nodes must have
been sent (and thus received) within an interval $[t_v-\BO(d),t_v+\BO(d)]$. We
conclude that (i) no correct node will join the instance earlier than time
$t_v-\BO(d)$, (ii) all correct nodes will receive at least $f+1$
echo$(u,H_u)$ messages from different sources by time $t_v+\BO(d)$, (iii) as
$\Delta$ is sufficiently large, at all correct nodes the condition in
\lineref{line:store_cond} of \algref{algo:initiate} will be met when receiving
these messages, and therefore (iv) all correct nodes join the instance by time
$t_v+\BO(d)$.
\end{proof}

\subsection*{Running Consensus}

The silence property of the employed consensus protocol deals with all instances
without a correct node with non-zero input. \lemmaref{lemma:synchronous_start}
shows that all correct nodes participate in any other instance. Hence, we need
to show that any instance in which all correct nodes participate successfully
simulates a synchronous execution of the consensus protocol.

\begin{lemma}\label{lemma:consensus_correct}
Suppose that $\Delta,C\in \BO(d)$ are sufficiently large and that some node from
$G$ participates in instance $(v,H_v)$ at time $t_0$ with input value different
from $0$. Then each node $w\in G$ computes an output for the instance
(\lineref{line:output} of \algref{algo:run}) at some time $t_w\in [t^-,t^+]$,
where $t^+-t^-\in \BO(d)$ and $t^-,t^+\in t_0+\Theta(R)$. These outputs are the
result of some synchronous run of ${\cal P}_{\!s}$ with the inputs the nodes
computed when joining the instance.
\end{lemma}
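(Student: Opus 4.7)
The approach is a round-by-round induction that piggy-backs on \lemmaref{lemma:synchronous_start} and on the Srikanth--Toueg-style synchronizer built into \algref{algo:run}. First I would apply \lemmaref{lemma:synchronous_start} to the hypothesis to conclude that every $w\in G$ participates in the instance at some real time $t_w$ in a window of size $\BO(d)$, at which it sets $H_w^{(1)}:=H_w(t_w)+C$. Since $C\in \BO(d)$ and clock rates lie in $[1,\vartheta]$, the real times at which correct nodes reach local time $H_w^{(1)}$ — and thus compute $M_w^{(1)}$ and send their round-$1$ messages — lie in a window $[s_1^-,s_1^+]$ of size $\BO(d)$. This is the base case.

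For the inductive step, given that all correct nodes send round-$i$ messages within some window $[s_i^-,s_i^+]$ of size $\BO(d)$, I would prove two statements simultaneously: (a) all correct nodes reach local time $H_w^{(i+1)}$ within a window $[s_{i+1}^-,s_{i+1}^+]$ of size $\BO(d)$, and (b) at the moment any correct $w$ computes $M_w^{(i+1)}$, every round-$i$ message sent by a correct node has already been stored. Statement (b) is what makes the overall execution a faithful simulation of a synchronous run of ${\cal P}_{\!s}$. It follows because the earliest correct node to set $H_w^{(i+1)}$ does so via the $n-f$-message branch in \lineref{line:next} and then waits $2\vartheta d$ local time — at least $2d$ in real time — which, for the $\BO(d)$ constants chosen small enough, exceeds the induction-hypothesis window plus one message delay. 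Statement (a) has two ingredients: the $n-f$ branch is triggered at every correct node within $\BO(d)$ of the first such trigger (because all $|G|\geq n-f$ correct senders have fired within $\BO(d)$ by the inductive hypothesis), and the $f+1$-message catch-up branch in \lineref{line:catch_up} snaps any lagging correct node's $H_w^{(i+1)}$ to current local time as soon as $f+1$ round-$(i+1)$ messages — of which at least one is from a correct sender — have been received, preventing the window from growing.

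Iterating $R$ times delivers all correct nodes to local time $H_w^{(R+1)}$ within a $\BO(d)$ window, at which point they execute \lineref{line:output} and compute an output. Summing the per-round real-time increments, each of which is $\Theta(d)=\Theta(1)$, yields $t_w\in t_0+\Theta(R)$; together with the preserved window this gives the asserted $t^-,t^+$. Because statement (b) holds at every round, each correct node's round-$(i+1)$ message set $M_w^{(i+1)}$ is computed from exactly the messages that the correct senders transmitted in round $i$, augmented by whatever messages the faulty nodes chose to send — a legal adversarial strategy in the synchronous model — so the outputs are those of some synchronous execution of ${\cal P}_{\!s}$.

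The main obstacle I expect is the constant accounting needed to close the induction: the $2\vartheta d$ timeout in \lineref{line:next} must simultaneously be large enough that by the time $w$ computes $M_w^{(i+1)}$ every correct round-$i$ message has arrived (for (b)), and small enough that the window $s_{i+1}^+-s_{i+1}^-$ does not exceed $s_i^+-s_i^-$ (for (a)). This pins down the hidden constants in $\Delta$ and $C$ inherited from \lemmaref{lemma:synchronous_start} and the base case. A related subtlety is verifying, in the catch-up case, that the $f+1$ triggering round-$(i+1)$ messages originate from senders that have themselves waited out the full timeout, which keeps statement (b) intact even for lagging nodes.
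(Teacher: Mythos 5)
Your overall skeleton (induction over rounds, base case from \lemmaref{lemma:synchronous_start}, the Srikanth--Toueg catch-up rule) matches the paper, but the way you close statement (b) has a genuine gap. You argue that the first node to set $H_w^{(i+1)}$ waits $2\vartheta d$ local time, ``at least $2d$ in real time,'' and that this ``exceeds the induction-hypothesis window plus one message delay'' provided the $\BO(d)$ constants are ``chosen small enough.'' Those constants are not yours to choose: the spread of round-$i$ send times is inherited from \lemmaref{lemma:synchronous_start} and the analysis of \algref{algo:initiate} (and from clock drift over the $C$-wait), and its hidden constant is in general larger than $1$, while the timeout in \lineref{line:next} is fixed at $2\vartheta d$, i.e., possibly as little as $2d$ of real time. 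So ``timeout $\geq$ window $+$ delay'' cannot be enforced, and you acknowledge this tension yourself (``large enough \dots{} small enough'') without resolving it; as stated, the induction does not close.

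The paper resolves this differently: it never compares the timeout to the window at all. Let $t_{i+1}$ be the \emph{first} time a node in $G$ executes \lineref{line:next} for round $i+1$. Among its $n-f$ stored round-$i$ tuples, at least $n-2f\geq f+1$ come from correct nodes that have therefore already sent their round-$i$ messages before $t_{i+1}$; within $d$ time these messages reach every correct node, so by $t_{i+1}+d$ every correct node satisfies the $f+1$ part of \lineref{line:catch_up} \emph{for round $i$} and hence has $t_x^{(i)}<t_{i+1}+d$. Thus all correct round-$i$ messages are sent by $t_{i+1}+d$ and received by $t_{i+1}+2d\leq t^{(i+1)}$ -- exactly what the $2\vartheta d\geq 2d$ timeout buys, independently of how wide the round-$i$ window was. (It also needs that no correct node reaches round $i+1$ via catch-up before $t^{(i+1)}$, which holds because no correct node sends $(\cdot,i+1)$ messages earlier.) The same anchoring argument re-compresses the window: every correct node triggers \lineref{line:next} by $t_{i+1}+2d$, so $t_x^{(i+1)}\in[t_{i+1}+2d,\,t_{i+1}+2d+2\vartheta d]$, giving the $\BO(d)$ window per round without any constraint that the window shrink relative to the timeout. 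In short, you invoke the catch-up rule only for round $i+1$ (window control), whereas the crucial use is for round $i$ (simulation fidelity); repairing your step (b) essentially requires switching to this argument.
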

\begin{proof}
We will denote for each node $w\in G$ and each $i\in \{1,\ldots,R+1\}$ by
$t_w^{(i)}$ the time satisfying that $H_w(t_w^{(i)})=H_w^{(i)}(t_w^{(i)})$; we
will show by induction that these times exist and are unique. Define
$t^{(i)}:=\min_{w\in G}\{t_w^{(i)}\}$. The induction will also show that all
nodes $w\in G$ compute and send their messages, as well as receive and store all
messages from other nodes in $G$ for rounds $j<i$, $i\in \{2,\ldots,R+1\}$, of
the protocol (i.e., execute Lines \ref{line:first} or \ref{line:compute} and
\ref{line:receive} of \algref{algo:run}) at times smaller than $t^{(i)}$. Note
that these properties show that the progression of \algref{algo:run} can be
mapped to a synchronous execution of ${\cal P}_{\!s}$ and the messages
$M_w^{(i)}$ can indeed be computed according to ${\cal P}_{\!s}$. Finally, the
induction will show that $t_w^{(i)}\in t_0+\Theta(id)$ for all $w\in G$ and
$i\in \{2,\ldots,R+1\}$; the stated time bounds on $t^-$ and $t^+$ follow. As
the messages $M_w^{(1)}$ the nodes compute in \lineref{line:first} are
based on the inputs the node compute when joining the instance, completing the
induction will thus also complete the proof.

Before we perform the induction, let us make a few observations. The only way to
manipulate $H_w^{(i)}\neq \bot$ at some time $t$ is to set it to $H_w(t)$,
provided it was larger than that (\lineref{line:catch_up}). Thus, once defined,
$H_w^{(i)}(\cdot)$ is non-increasing, and can never be set to a value smaller
than $H_w(t)$. In particular, the times $t_w^{(i)}$ are unique (if they exist).
Furthermore, the conditions for computing and sending messages are checked after
this line, implying that the lines in which messages are computed and sent are
indeed performed at the unique time $t_w^{(i)}$. Therefore, each node $u\in G$
sends (at most) one message $(\cdot,i)$ to each node $w\in G$, which will be
received and stored a time from $(t_w^{(i)},t_w^{(i)}+d)$, assuming that the
receiver already joined the instance. The latter can be seen as follows. We
apply \lemmaref{lemma:synchronous_start} to see that each node $w\in G$
participates in the instance at some time $t_w^{(0)}\in t_0+\Theta(d)$. Thus, if
$C\in \BO(d)$ is sufficiently large, each node $w\in G$ has joined the instance
and computed $H_w^{(1)}=H_w(t_w^{(0)})+C$ before time $t^{(1)}$ (which exists
because $H_w^{(1)}$ has been set to some value).

We now perform the induction step from $i\in \{1,\ldots,R\}$ to $i+1$. First,
let us show that the times $t_w^{(i+1)}$ exist. Since each node $G$ sends some
message $(m,i)$ to each other node in $G$ at some time from $t_0+\Theta(i d)$,
each node $w\in G$ will execute \lineref{line:next} for $i$ at some time $t\in
t_0+\Theta(i d)$, setting $H_w^{(i+1)}:=H_w(t)+2\vartheta d$. We conclude that
the times $t_w^{(i+1)}$ exist. Clearly, no node in $G$ can execute
\lineref{line:catch_up} before time $t^{(i+1)}$, as until then no messages
$(\cdot,i+1)$ are sent by any nodes in $G$. Hence, $t_w^{(i+1)}\in
t_0+\Theta((i+1)d)$ for all $w\in G$. Now suppose that $t_{i+1}$ is minimal with
the property that some node $w\in G$ executes \lineref{line:next}, defining
$H_w^{(i+1)}$. At this time, it stores $n-f$ tuples $(u,m,i)$ for $u\in V$, at
least $n-2f\geq f+1$ of which satisfy that $u\in G$. For each such $u\in G$, it
holds that $t_u^{(i)}<t_{i+1}$, implying that at each node in $x\in G$ the first
part of the condition for executing \lineref{line:catch_up} for index $i$ will
be satisfied at some time smaller than $t_{i+1}+d$. Consequently,
$t_x^{(i)}<t_{i+1}+d$, and all messages from nodes in $G$ corresponding to round
$i$ will be sent by time $t_{i+1}+d$ and received by time $t_{i+1}+2d\leq
t^{(i+1)}$. By induction hypothesis, the same holds for all messages to and from
nodes in $G$ for rounds $j<i$. Thus, all claimed properties are satisfied for
step $i+1$, completing the induction and hence the proof.
\end{proof}

We conclude that Algorithms~\ref{algo:clock}--\ref{algo:run} together solve the
initialization problem.
\begin{theorem}\label{theorem:simulation}
Each consensus instance $(v,H_v)$ can be mapped to a synchronous execution of
${\cal P}_{\!s}$. If the instance has output $o\neq 0$, all nodes in $G$
output $o$ within $\BO(d)$ time of each other. Moreover, there is a node
in $w\in G$ satisfying that $f_w(t)=o$ for some time $t\in t_w-\Theta(R d)$,
where $t_w$ is the time when it outputs $o$. Finally, if $v$ is correct and
$t_0$ is the time when it initialized the instance, all nodes in $w\in G$
compute their inputs as $f_w(t_w)$ at some time $t_w\in t_0+\Theta(d)$. 
\end{theorem}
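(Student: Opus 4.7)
The plan is to assemble the lemmas already established, splitting on whether any correct node holds a non-zero input upon joining the instance.

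I would first dispose of the trivial case in which no $w\in G$ joins with a non-zero input (treating nodes that never join as having input $0$). Since ${\cal P}_{\!s}$ is silent, no correct node transmits during the simulation, so the absence of messages from non-participating nodes is consistent with a synchronous execution of ${\cal P}_{\!s}$ whose correct inputs are all $0$. Every participating $w\in G$ defaults to outputting $0$, matching the output of that synchronous run, which establishes the first bullet and makes the remaining bullets vacuous.

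In the main case, some $v'\in G$ participates in the instance with non-zero input at a time $t_0$. Choosing $\Delta, C\in \BO(d)$ large enough for both \lemmaref{lemma:synchronous_start} and \lemmaref{lemma:consensus_correct} to apply (their maximum suffices), \lemmaref{lemma:synchronous_start} forces every $w\in G$ to participate within a window of width $\BO(d)$ around $t_0$. \lemmaref{lemma:consensus_correct} then yields the faithful synchronous mapping to an execution of ${\cal P}_{\!s}$ on the inputs that the correct nodes computed on joining, and it also gives that every correct termination time lies in a $\BO(d)$-wide interval contained in $t_0+\Theta(R)$. This delivers the synchronous simulation and the $\BO(d)$ agreement on termination times claimed for $o\neq 0$.

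For the source of a non-zero $o$, I would invoke the binary validity of ${\cal P}_{\!s}$: since $o\in\{0,1\}$ and $o\neq 0$, not every correct simulated input could equal $0$, for otherwise validity would force output $0$. Hence some $w\in G$ assigned input $i_w(v,H_v,t_w)=o$ at \lineref{line:trust} of \algref{algo:initiate}, and by \lemmaref{lemma:consensus_correct} the time $t_w$ at which $w$ joined the instance lies $\Theta(R)$ before the time at which $w$ outputs $o$, giving the required witness in $t_w-\Theta(Rd)$. Finally, if $v\in G$ initialized the instance at time $t_0$, \lemmaref{lemma:initiate_good} directly supplies $t_w\in[t_0+2d,t_0+\BO(d)]\subseteq t_0+\Theta(d)$ together with the fact that each $w\in G$ uses input $i_w(v,H_v(t_0),t_w)$, establishing the last bullet. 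I do not anticipate a serious obstacle: the theorem is essentially a bookkeeping corollary of the preceding lemmas, the silence of ${\cal P}_{\!s}$, and its binary validity.
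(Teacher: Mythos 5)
Your proposal is correct and follows essentially the same route as the paper's own proof: the identical case split (no correct node with non-zero input handled via silence, otherwise \lemmaref{lemma:synchronous_start} and \lemmaref{lemma:consensus_correct}, and \lemmaref{lemma:initiate_good} for a correct initiator). You merely spell out what the paper compresses into ``the properties of ${\cal P}_{\!s}$'', namely using binary validity to extract the witness node with input equal to $o$, which is a fair and correct elaboration.
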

\begin{proof}
Assume first that no node in $G$ participates in the instance with an input
different from $0$. Then no node in $G$ will send a message $(m,i)$ for any $i$
with $m\neq \emptyset$ for this instance: ${\cal P}_{\!s}$ is silent, and
\algref{algo:run} interprets any ``missing'' message as having received no
message from the respective node in Lines \ref{line:compute} and
\ref{line:output}; in particular, all nodes in $G$ will output $0$.

Next, suppose that some correct node has input different from $0$. In
this case, the claimed properties follow from \lemmaref{lemma:consensus_correct}
and the properties of ${\cal P}_{\!s}$.

Finally, assume that $v\in G$ and $t_0$ is the time when $v$ initializes the
instance. \lemmaref{lemma:initiate_good} shows that each node $w\in G$
participates in the instance at some time $t_w\in t_0+\Theta(d)$ with input
$f_w(t_w)$.
\end{proof}

With the initialization problem being solved, it is straightforward to derive an
algorithm that enables consistent initialization of arbitrary consensus
protocols.
\begin{corollary}\label{coro:simulation}
Given any $R$-round synchronous consensus algorithm ${\cal P}$ tolerating
$f<n/3$ faults, there is an algorithm with the following guarantees.
\begin{compactitem}
  \item Each (correct) node can initiate an instance of ${\cal P}$ at any time
  $t$.
  \item For any instance (also those initiated by faulty nodes) it holds that
  nodes determine their inputs according to their local view of the system
  during some interval $[t_1,t_1+\BO(1)]$, and terminate during some interval
  $[t_2,t_2+\BO(1)]$, where $t_2\in t_1+\Theta(R)$.
  \item If a correct node initiates an instance at time $t$, then $t_1=t$.
  \item Each instance satisfies termination, agreement, and validity.
  \item The above guarantees hold in the presence of $f$ faulty nodes.
\end{compactitem}
\end{corollary}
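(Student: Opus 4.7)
The plan is to use \theoremref{theorem:simulation}'s solution to the initiation problem as a binary ``commit to run ${\cal P}$'' primitive, then layer on top of it a synchronous simulation of ${\cal P}$ built from the same round-synchronization template as \algref{algo:run}. Concretely, I would instantiate the initiation problem with a silent binary consensus ${\cal P}_{\!s}$ obtained via \lemmaref{lemma:silent} and hard-wire $i_w\equiv 1$ for every correct node. When a correct node $v$ wishes to initiate ${\cal P}$ at time $t$, it invokes the initiation instance $(v,H_v(t))$; each correct node $w$, at the moment $t_w$ at which it participates in that instance, simultaneously snapshots its (arbitrary, possibly non-binary) input $x_w$ for ${\cal P}$ from its local view. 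If the initiation instance outputs $1$ at $w$, node $w$ joins a subsequent synchronous simulation of ${\cal P}$ on input $x_w$, using a fresh copy of \algref{algo:run}'s round-synchronization machinery labelled by the instance; if the output is $0$, $w$ discards $x_w$ and no ${\cal P}$-instance is deemed to have been run.

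I would then verify the stated properties in four short steps. First, for instances initiated by a correct $v$ at time $t$, \lemmaref{lemma:initiate_good} shows that every correct node participates in the initiation instance within $t+\Theta(d)$ using input $1$, so property~6 of the initiation problem forces output $1$ at every correct node; setting $t_1=t$ then comfortably captures the $\BO(1)$-wide snapshot window. Second, for instances nominally initiated by faulty nodes, agreement (property~5) together with property~7 guarantee that if the output is $1$ at any correct node, then every correct node participates and terminates the initiation instance within a constant-size window, so all correct nodes snapshot their ${\cal P}$-inputs inside a common interval $[t_1,t_1+\BO(1)]$ and subsequently enter the ${\cal P}$-simulation together. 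Third, once all correct nodes enter that simulation in lockstep, the proof of \lemmaref{lemma:consensus_correct} applies almost verbatim---silence of ${\cal P}$ is not needed here, since every correct node participates---so the simulation faithfully reproduces a synchronous execution of ${\cal P}$ and termination, agreement and validity transfer directly from ${\cal P}$. Fourth, the $R$-round simulation inherits the $\BO(d)$-per-round bound, yielding $t_2\in t_1+\Theta(R)$ and a termination window of width $\BO(1)$.

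The only delicate point is maintaining a clean dichotomy between ``run ${\cal P}$ at every correct node'' and ``run ${\cal P}$ at no correct node.'' Without it, the ${\cal P}$-simulation could be launched at an inconsistent subset of correct nodes, breaking agreement. This is exactly what property~7 of the initiation problem rules out, so once the initiation primitive is in place the remaining work is routine timing bookkeeping.
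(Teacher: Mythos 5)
Your proposal matches the paper's own proof: both run the initiation framework with ${\cal P}_{\!s}$ from \lemmaref{lemma:silent} and inputs fixed to $1$, have correct nodes snapshot their ${\cal P}$-inputs upon joining the initiation instance, map output $1$ to a run of ${\cal P}$ via a copy of \algref{algo:run} (justified by \lemmaref{lemma:consensus_correct} since all correct nodes participate) and output $0$ to taking no action, with the timing bounds drawn from \theoremref{theorem:simulation}. The argument is correct and essentially identical to the paper's.
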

\begin{proof}
We run algorithms Algorithms~\ref{algo:clock}--\ref{algo:run} in the background,
with input functions always returning $1$ and ${\cal P}_{\!s}$ (the derived silent
protocol from \lemmaref{lemma:silent}) as the utilized silent consensus
protocol. Whenever a node wants to initiate an instance of ${\cal P}$ at a time
$t$, it first initiates an instance of ${\cal P}_{\!s}$ using our framework. When
reaching the threshold of echo messages to participate in the instance (at some
time from $(t,t+\BO(d))$), correct nodes store the input they will use if this
call leads to an actual run of ${\cal P}$, according to their current view of
the system.

Provided that a correct node initiates an instance, by
\theoremref{theorem:simulation} all correct nodes will compute output $1$ for
the associated instance of ${\cal P}_{\!s}$ (by validity). This is mapped to
starting an associated run of ${\cal P}$ with the inputs memorized earlier,
where a copy of \algref{algo:run} is used to run ${\cal P}$. Note that, since
all correct nodes participate, \lemmaref{lemma:consensus_correct} shows that we can
map the execution of \algref{algo:run} to a synchronous execution of ${\cal P}$
with the inputs determined upon initialization, where each correct node
terminates during an interval $[t',t'+\BO(d)]$ for some $t'\in t+\Theta(R)$.

On the other hand, output $0$ is mapped to taking no action at all. For
instances of ${\cal P}_{\!s}$ that output $1$, \theoremref{theorem:simulation}
shows that all nodes terminate within $\BO(d)$ time off each other. Previous
arguments also show that the inputs to the resulting run of ${\cal P}$ have been
determined $\Theta(R)$ time earlier, as desired. We conclude that all claimed
properties are satisfied.
\end{proof}

\section{Self-Stabilization and Bounded Communication
Complexity}\label{sec:self-stab}

In this section, we discuss how the previous results can be generalized to
\theoremref{theorem:main}. We will add self-stabilization first, then argue how
to use discrete and bounded clocks, and finally control the rate at which
consensus instances can be initiated.

\subsection*{Self-Stabilization}

Within $d$ time, the links deliver all spurious messages from earlier times;
afterwards, each message received from a correct node will be sent in accordance
with the protocol.

We take a look at the individual components of the algorithm.
\algref{algo:clock} is not self-stabilizing, because the loss of trust in a node
cannot be reversed. This is straightforward to rectify, by nodes starting to
forward received claimed clock values if their senders are well-behaving for
sufficient time, and subsequently starting to trust a node again if receiving
consistent reports on its clock from $n-f$ nodes for sufficiently long. This is
detailed in \sectionref{app:clocks}, where we present
\algref{algo:clock_stab}, a self-stabilizing variant of \algref{algo:clock}.

As \algref{algo:clock_stab} will operate correctly after $\BO(R)$ time, it is
not hard to see how to make \algref{algo:initiate} self-stabilizing. We know
that a ``correct'' execution for a given label will start with a ``clean slate''
(i.e., no tuples stored at any correct node). All related messages sent and
received by correct nodes as well as possibly joining the instance are confined
within a time window of length $\tau\in \BO(d)$. Hence, we can add timeouts
deleting stored tuples from memory $\vartheta\tau$ local time after they have
been written to memory, without disturbing the operation of the
algorithm.\footnote{Note that this can be done in a self-stabilizing way by
memorizing the local times when they have been stored; if such a time lies in
the future or more than $\vartheta\tau$ time in the past (according to the
current value of the hardware clock), the entries need to be deleted.} By making
the time to regain trust in a (faulty) node's clock (distributed by
\algref{algo:clock_stab}) larger than $\vartheta\tau$, we can guarantee that
memory will be wiped before the faulty node can ``reuse'' the same label at a
later time (by ``setting its hardware clock back''). This modification ensures
that \algref{algo:initiate} will stabilize within $\BO(d)$ time once
\algref{algo:clock_stab} does.

Similar considerations apply to \algref{algo:run}. We know that a ``correct''
execution of the algorithm progresses to the next simulated round of ${\cal
P}_{\!s}$ within $\tau\in\BO(d)$ time (all correct nodes participate) or correct
nodes do not send any messages in the simulated execution of ${\cal P}_{\!s}$
and output $0$ (by silence). Adding a timeout of $\vartheta \tau$ (locally)
terminating the instance with output $0$ if no progress is made thus guarantees
termination within $\BO(R)$ time. Naturally, this may entail that correct nodes
``leave'' an instance prematurely, but this may happen if the instance was not
initialized correctly (i.e., nodes have lingering false memory entries from time
$0$) or the instance is silent (i.e., there is no need to send messages and the
output is $0$ at all correct nodes) only. Similar to \algref{algo:initiate}, this
strategy guarantees that false memory entries can be safely wiped within
$\vartheta \tau R$ rounds; increasing the timeout to regain trust in
\algref{algo:clock_stab} to $\vartheta^2 \tau R$ thus guarantees that
\algref{algo:run} will stabilize within $\BO(R)$ rounds once Algorithm
\ref{algo:clock_stab} and \ref{algo:initiate} have, in the sense that to its
future outputs the arguments and bounds from \sectionref{sec:analysis}
apply.

Finally, we note that when calling \algref{algo:run} for protocol ${\cal P}$ in
\corollaryref{coro:simulation}, always all nodes participate. Hence, the same
arguments apply and a total stabilization time of $\BO(R)$ follows.

\subsection*{Discrete and Bounded Clocks}

In practice, clocks are neither continuous nor unbounded; moreover, we need
clock values to be bounded and discrete to encode them using few bits.
Discretizing clocks with a granularity of $\Theta(d)$ will asymptotically have
no effect on the bounds: We simply interpret the discrete clocks as readings of
continuous clocks with error $\BO(d)$. It is not hard to see that this can be
mapped to a system with exact readings of continuous clocks and larger maximal
delay $d'\in \BO(d)$, where all events at node $v$ happen at times when
$H_v(t)\in \N$.\footnote{This entails that timeouts are integer, which also
clearly does not affect the asymptotic bounds.}

As shown in \corollaryref{coro:clock_bad}, choosing $B\in \Theta(R)$ in
\algref{algo:clock_stab} guarantees the following. For each sufficiently large
time $t\geq t_0\in \Theta(R)$, all correct nodes from $G$ trusting some node
$v\in V$ at time $t$ received clock values from $v$ that increased at constant
rate for $\Theta(R)$ time and differed at most by $\BO(d)$. From this it follows
that using clocks modulo $M\in \Theta(R)$ is sufficient: Choosing $M$
sufficiently large, we can make sure that for any label $(v,H)$, every
$\Theta(R)$ time there will be a period of at least $\vartheta^2\tau R$ ($\tau$
as above) time during which all correct node reject initialization messages
labeled $(v,H)$. This ensures that memory will be wiped before the next messages
are accepted and the previous arguments for self-stabilization apply.

\subsection*{Bounding the Communication Complexity}

Using bounded and discrete clocks and assuming that $R$ is polynomially bounded
in $n$, each clock estimate (and thus each label) can be encoded by $\BO(\log
n)$ bits. Hence, each correct node will broadcast $\BO(n\log n)$ bits in
$\Theta(d)$ time when executing \algref{algo:clock_stab}, for a total of
$\BO(n^2\log n)$ bits per node and time unit.

However, so far each node may initiate an instance at any time, implying that
faulty nodes could initiate a large number of instances with the goal of
overloading the communication network. Hence, we require that correct nodes wait
for at least $T\geq 2\vartheta d$ local time between initializing instances.
Under this constraint, it is feasible that correct nodes ignore any init message
from $v\in V$ that is received less than $T/\vartheta-d$ local time after the
most recent init message from $v$. As a result, no node will broadcast more than
$\BO(n \log n)$ bits within $T$ time due to executing \algref{algo:initiate}.

Moreover, now there cannot be more than one instance per node $v$ and
$\tilde{T}=(T/\vartheta-d)/\vartheta$ time such that some correct node
participates with non-zero input due to messages sent at times greater than $0$
alone (i.e., not due to falsely memorized echo messages at time $0$): this
requires the reception of $n-2f$ corresponding echo messages from correct nodes,
which will not send echo messages for another instance labeled $(v,\cdot)$ for
$T/\vartheta$ time. Such an instance runs for $\BO(R)$ time. There are at most
$|G|=n-f$ other instances with label $(v,\cdot)$ a node may participate in
within $\tilde{T}$ time ($f+1$ received messages imply one was from a correct
node), all of which terminate within $2$ simulated rounds with ``empty''
messages $(\emptyset,1)$ or $(\emptyset,2)$ only.

For any $v\in V$, this leads to the following crucial observations: (i) If a
node memorizes that it participates in more than $k_1\in \BO(R/\tilde{T})$
instances labeled $(v,\cdot)$ which did not terminate by the end of round $2$ or
sent other messages than $(\emptyset,1)$ or $(\emptyset,2)$, its memory content
is inconsistent; (ii) if a node memorizes that it participates in more than
$k_2\in\BO(n/\tilde{T})$ instances $(v,\cdot)$, its memory content is
inconsistent; (iii) as memorized echo messages and memory associated with an
instance of \algref{algo:run} is cleared within $\BO(R+T)$ time, (i) or (ii) may
occur at times $t\in \BO(R+T)$ only; and (iv) if a node $w\in G$ detects (i) or
(ii) at time $t$ and deletes at time $t+d$ all memorized echo messages, forces all
timeouts $E_v^{\cdot}(\cdot)$ into the expired state, and clears all memory
entries corresponding to \algref{algo:run}, (i) or (ii) cannot happen again at
this node.

Hence, we add the rule that a node detecting (i) or (ii) stops sending any
messages corresponding to \algref{algo:run} for $\vartheta d$ local time and
then clears memory according to observation (iv). By (iii), this mechanism will
stop interfering with stabilization after $\BO(R+T)$ time; afterwards, the
previous arguments apply. Furthermore, (i) and (ii) imply that a node never
concurrently participates in more than $k_1$ instances for which it sends
non-empty messages, and sends at most $\BO(n^2\log n)$ bits ($\BO(n)$
broadcasted round numbers and labels) in $\BO(\tilde{T})=\BO(T)$ time due to
other instances.

Hence, it remains to control the number of bits sent by the at most
$k_1$ remaining instances. Recall that the messages sent by
\algref{algo:run} are of the form $(m,i)$, where $m$ is a message sent by ${\cal
P}_{\!s}$ and $i$ is a round number. We know that in a correct simulated
execution of such an instance, the node sends up to $B+\BO(rn\log n)$ bits
within $rd$ time: $B$ is the maximal number of bits sent by a node in an
execution ${\cal P}$, the additional two initial round of ${\cal P}_{\!s}$
require nodes to broadcast single-bit messages, and $\log R\in \BO(\log n)$
broadcasted bits are required to encode round numbers and labels. Therefore, a
node can safely locally terminate any instance violating these bounds and
output, say, $0$. Such a violation may only happen if the instance has not been
properly initialized; since \emph{any} instance terminates within $\BO(R)$ time
and \algref{algo:clock_stab} and subsequently \algref{algo:initiate} will
stabilize within $\BO(R+T)$ time, we can conclude that, again, this mechanism
will not interfere with stabilization once $\BO(R+T)$ time has passed.

In summary, we have shown the following.
\begin{compactitem}
  \item We can modify the algorithm from \corollaryref{coro:simulation} such
  that it self-stabilizes in $\BO(R)$ time.
  \item We can further modify it to operate with bounded and discrete hardware
  clocks.
  \item For $T\geq 2\vartheta d$, additional modifications ensure that, for each
  correct node, the amortized number of bits sent per time unit is $\BO(n^2\log
  n+k_1nB)=\BO(n^2\log n+nBR/T)$; this increases the stabilization time to
  $\BO(R+T)$ and entails that correct nodes wait at least $T$ local time between
  initializing instances.
\end{compactitem}
The resulting statement is exactly \theoremref{theorem:main}.

\section{Self-Stabilizing Clock Distribution}\label{app:clocks}

\algref{algo:clock_stab}, the self-stabilizing variant of \algref{algo:clock},
is essentially identical, except that the loss of trust upon detecting an
inconsistency is only temporary. To this end node $v\in V$ maintains timeouts
$A_v^w$ and $B_v^w$ for each node $w\in V$, of durations $2\vartheta d$
and $B$, respectively. The clock estimate $v$ has of $w$ then is $M_{vww}(t)$ at
times $t$ when $B_v^w$ is expired and $\bot$ otherwise. Timeout $A_v^w$ is reset
whenever $w$ announces clock values to $v$ that violate the timing constraints,
i.e., an update message is sent too soon or too late after the previous, or it
does not have contain the previous value increased by $2\vartheta d$. Whenever
$A_v^w$ is not expired, $v$ will report $\bot$ as the ``clock value'' it
received from $w$ to others, expressing that there has been an inconsistency; at
other times, it reports the most recent value received. If a node keeps sending
values in accordance with the timing constraints, eventually all correct nodes
will be reporting these values (as their $A$-timeouts expire). Subsequently the
check in \lineref{line:reset_B} will always be passed, which resets $B_v^w$
whenever there is insufficient support from others for the clock value $w$
claims to $v$. Eventually, $B_v^w$ will expire, and $w$'s trust in $v$ is
restored.

\begin{algorithm}[ht!]
\If{$H_v(t)\!\!\mod 2\vartheta d = 0$}{
  \For{$w\in\{1,\ldots,n\}\setminus \{v\}$}{
    \If{$H_v(t)-R_v^w>(2\vartheta^2+\vartheta)d$}{reset $A_v^w$ and
    $B_v^w$\nllabel{line:reset_A_B_1}}
    \If{$A_v^w=1$}{$\hat{M}_{vww}:=M_{vww}$}
    \Else{$\hat{M}_{vww}:=\bot$}
  }
  $M_{vv}:=(\hat{M}_{v11},\ldots,\hat{M}_{vnn})$\nllabel{line:update-mvv}\\
  broadcast update$(M_{vv})$\nllabel{line:update_stab}
}
\If{received \emph{update}$(M_{w11},\ldots,M_{wnn})$ from node $w$ at time $t$}{
  \If{$H_v(t)-R_v^w< d$ or $M_{www}-M_{vww}\neq 2\vartheta
    d$\nllabel{line:check_reset_A_B_2}}{reset $A_v^w$ and
    $B_v^w$\nllabel{line:reset_A_B_2}}
  $M_{vw}:=(M_{w11},\ldots,M_{wnn})$\\
  \For{$x\in\{1,\ldots,n\}\setminus \{v\}$}{
    \If{$|\{u\in V\,|\,|M_{vxx}-M_{vux}|\leq
    (2\vartheta^2+4\vartheta)d\}|<n-f$}{reset $B_v^x$\nllabel{line:reset_B}}
  }
  $R_v^w:=H_v(t)$\nllabel{line:reset_R_stab}
}
\caption{Actions of node $v\in V$ at time $t$ that relate to maintaining
self-stabilizing clock estimates.}\label{algo:clock_stab}
\end{algorithm}

Note that it is straightforward to adapt the algorithm to bounded clocks
modulo some value $M\gg B$. As we just argued why the algorithm stabilizes in
the sense that correct nodes eventually trust each other, the following
analogon to \lemmaref{lemma:clock_good} is immediate.
\begin{corollary}\label{coro:clock_good}
Suppose that $t_0\in \BO(d+B)$ is sufficiently large. If $v,w\in G$, then at any
time $t\geq t_0$ it holds that $H_w(t)\geq M_{vww}(t)\geq H_w(t)-3\vartheta d$.
\end{corollary}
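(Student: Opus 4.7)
The plan is to reduce to the argument of \lemmaref{lemma:clock_good} once trust between correct nodes has been re-established, and to track the time this takes through three phases of length $\BO(d+B)$ in total.

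First, within $d$ time the channels discharge all spurious messages from the initial state; from then on, any update message a correct node $v$ receives as coming from $w\in G$ was actually sent by $w$ according to the protocol. Hence, once $v$ has received two consecutive proper updates from $w$ (within $\BO(d)$ more real time), neither \lineref{line:reset_A_B_1} nor \lineref{line:check_reset_A_B_2} will trigger on account of $w$, so $A_v^w$ ceases to be reset and, within its duration of $2\vartheta d$ local time, expires. From then on $v$ advertises $\hat{M}_{vww}=M_{vww}$ in every update it broadcasts (\lineref{line:update-mvv}). Thus by some time $t_1\in\BO(d)$, every $u\in G$ is reporting truthful clock estimates of every $w\in G$ to every other correct node.

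Next, I would repeat the combinatorial argument in the proof of \lemmaref{lemma:clock_good} essentially verbatim: for $v,w\in G$ and any time sufficiently after $t_1$, the entries $M_{vuw}$ stored at $v$ for $u\in G$ lie within $(2\vartheta^2+4\vartheta)d$ of $M_{vww}$, because they were sourced from recent update messages whose contents trace back to $w$'s own broadcasts via the clean communication established in phase one, and $w$'s clock drifts by at most this amount within one update cycle. Since $|G|\geq n-f$, the support check in \lineref{line:reset_B} is therefore satisfied at $v$ for $x=w$ from some time $t_2=t_1+\BO(d)$ on, so $B_v^w$ is no longer reset. Once its duration of $B$ local time elapses (at some time in $t_2+\BO(B)$), $v$'s trust in $w$ is restored, and from then on $B_v^w$ remains expired for the same reason $A_v^w$ does. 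Taking the maximum over the $\BO(n^2)$ pairs in $G\times G$ still yields a threshold $t_0\in\BO(d+B)$.

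From time $t_0$ onward, the situation is precisely that of \lemmaref{lemma:clock_good}: both $v$ and $w$ are correct, $v$ trusts $w$, and $w$ broadcasts updates on schedule with values increasing by exactly $2\vartheta d$. The proof of that lemma then applies unchanged to give $H_w(t)\geq M_{vww}(t)\geq H_w(t)-3\vartheta d$ for every $t\geq t_0$. The main subtlety I anticipate is the second phase: one must verify that the reports $v$ collects from other correct nodes are not merely truthful but also recent enough to satisfy the window check in \lineref{line:reset_B} at the precise moments $v$ evaluates it; this is handled as in the proof of \lemmaref{lemma:clock_good} by examining the most recently received update message from each correct witness and bounding clock progress during a single update cycle.
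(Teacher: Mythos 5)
Your argument is correct and follows the same route the paper intends: the paper gives no separate proof of this corollary, treating it as immediate from the stabilization discussion preceding \algref{algo:clock_stab} (spurious messages flushed, $A$-timeouts expire so correct nodes report truthfully, the support check in \lineref{line:reset_B} then always passes so $B$-timeouts expire after $B$ local time, restoring trust within $\BO(d+B)$) combined with the argument of \lemmaref{lemma:clock_good}. Your three-phase write-up simply makes that implicit reasoning explicit, so no further changes are needed.
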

\lemmaref{lemma:clock_bad} is translated in a similar fashion.
\begin{corollary}\label{coro:clock_bad}
Suppose that $v,w\in G$, $u\in V$, $t_v\geq t_0$ for a sufficiently large
$t_0 \in \BO(d)$, $t_w\in [t_v,t_v+B/\vartheta-(2\vartheta+1)d]$, $B_v^u$ is
expired at time $t_v$, and $B_w^u$ is expired at time $t_w$. Then
$M_{wuu}(t_w)-M_{vuu}(t_v)\in
[2(t_w-t_v)/(2\vartheta+3)-\BO(d),2\vartheta(t_w-t_v)+\BO(d)]$.
\end{corollary}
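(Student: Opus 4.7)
My plan is to follow the proof of \lemmaref{lemma:clock_bad} closely, replacing the unconditional trust assumptions used there with consequences of the hypothesis that both $B_v^u$ and $B_w^u$ are expired at the respective times. The first step is to observe what expiration of $B_v^u$ at time $t_v$ tells us: since $B_v^u$ has duration $B$, node $v$ has not executed a reset of $B_v^u$ during the local-time interval of length $B$ preceding $t_v$. By \lineref{line:reset_A_B_1}, \lineref{line:reset_A_B_2}, and \lineref{line:reset_B} of \algref{algo:clock_stab}, this means that throughout this interval every update from $u$ arrived within the admissible timing window and carried a value of exactly $M_{vuu}$ increased by $2\vartheta d$, and that upon each such update at least $n-f$ nodes supported the claimed value of $u$ to within $(2\vartheta^2+4\vartheta)d$. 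The analogous statement holds for $w$ at time $t_w$. Together with the assumption $t_0\in \BO(d)$, this reproduces, in the last $\Theta(B)$ local time before $t_v$ and $t_w$, exactly the consistency invariants that were used throughout the proof of \lemmaref{lemma:clock_bad}.

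Next I would extract a common correct witness as in the earlier proof. The supporting sets $X_v$ and $X_w$ of size at least $n-f$ that exist at the most recent update receptions $t_v',t_w'$ (both within $(2\vartheta+1)d$ real time of $t_v$ and $t_w$, respectively) must intersect in at least $n-2f\geq f+1$ nodes and hence contain a common $g\in G$. The constraint $t_w\leq t_v+B/\vartheta-(2\vartheta+1)d$ guarantees that the two most recent updates from $g$ received by $v$ and $w$ both fall inside the ``trusted window'' of length $B$ local time; this is precisely where the upper bound on $t_w-t_v$ is needed, and where the self-stabilizing proof differs from the non-stabilizing one. Using that $g$'s own $M_{guu}$ increases in lockstep by exactly $2\vartheta d$ per received update (otherwise $B_g^u$ or $A_g^u$ would be reset and $g$ would report $\bot$, contradicting $g\in X_v\cap X_w$), I obtain $|M_{guu}(t_s^v)-M_{guu}(t_s^w)|\in\BO(d)$ exactly as before, and then $|M_{vuu}(t_r^v)-M_{wuu}(t_r^w)|\in\BO(d)$ by the triangle inequality through $M_{vgu}$ and $M_{wgu}$.

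Finally, I bound the increments of $M_{vuu}$ on $[t_r^v,t_v]$ and of $M_{wuu}$ on $[t_r^w,t_w]$. Because the $B$-timeouts remain expired on these subintervals, the filter in \lineref{line:check_reset_A_B_2} of \algref{algo:clock_stab} enforces that successive accepted updates from $u$ arrive at least $d$ and at most $(2\vartheta^2+3\vartheta)d$ real time apart and each increases the stored estimate by exactly $2\vartheta d$. Summing yields the same two-sided bound as in \lemmaref{lemma:clock_bad}:
\begin{equation*}
\frac{2(t_v-t_r^v)}{2\vartheta+3}-2\vartheta d\leq M_{vuu}(t_v)-M_{vuu}(t_r^v)\leq 2\vartheta(t_v-t_r^v)+2\vartheta d,
\end{equation*}
and the same with $w$ in place of $v$. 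Combining the three estimates via the telescoping identity used in the original proof gives the claimed inclusion for $M_{wuu}(t_w)-M_{vuu}(t_v)$.

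The only genuinely new obstacle is the bookkeeping in the first paragraph: one must verify that the whole chain of invariants (\textbf{no} reset of $A_v^u$, $B_v^u$, $A_w^u$, $B_w^u$; supporting sets of size $n-f$ at each acceptance; exact $2\vartheta d$ increments) is in force throughout the real-time intervals spanned by $[t_r^v,t_v]$ and $[t_r^w,t_w]$ and extends back to the most recent shared update from $g$. This is where the hypotheses $t_v\ge t_0$ and $t_w-t_v\le B/\vartheta-(2\vartheta+1)d$ are used in an essential way; once this is established, the arithmetic from \lemmaref{lemma:clock_bad} transfers verbatim.
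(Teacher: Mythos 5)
Your proposal is correct and follows essentially the same route as the paper's own proof: it reduces to the argument of \lemmaref{lemma:clock_bad}, with the expiration of $B_v^u$ and $B_w^u$ replacing the ``never set $M:=\bot$'' conditions, the $n-2f\geq f+1$ intersection argument supplying the common correct witness $g$, and the $A_g^u$ timeout forcing lockstep $2\vartheta d$ increments to give the $\BO(d)$ bound before the final telescoping. Your explicit accounting of where $t_v\geq t_0$ and $t_w-t_v\leq B/\vartheta-(2\vartheta+1)d$ are needed matches the paper's (terser) justification.
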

\begin{proof}
The requirement that $t_v\geq t_0$ ensures that all spurious messages in the
communication network at time $0$ have been received and, afterwards, all
correct nodes sent and received at least two update messages from each other
correct node.

Consider the most recent update messages $v$ and $w$ received until time $t_v$,
at times $t_v',t_w'\in (t_v-(2\vartheta+1)d,t_v]$. Due to the prerequisites that
$B_v^u$ is expired at time $t_v$ and $B_w^u$ is expired at time $t_w$, neither
does $v$ set $M_{vuu}:=\bot$ at time $t_v'$ nor does $w$ set $M_{wuu}:=\bot$ at
time $t_w'$. Hence,
\begin{equation*}
\exists X_v\subseteq V: |X_v|\geq n-f \wedge \forall x\in X_v:
|M_{vuu}(t_v')-M_{vxu}(t_v')|\leq (2\vartheta^2+4\vartheta)d,
\end{equation*}
and there is a set $X_w$ satisfying the same condition for $w$ at time $t_w'$.
Clearly, $|X_v\cap X_w|\geq n-2f\geq f+1$. Hence, there is a correct node
$g\in X_v\cap X_w\cap G$.

From here we proceed analogously to the proof of \lemmaref{lemma:clock_good},
noting that $A_g^u$ being of duration $2\vartheta d$ guarantees that
$|M_{guu}(t_s^v)-M_{guu}(t_s^w)|\in \BO(d)$ for two consecutive update messages
sent by $g$ at times $t_s^v$ and $t_s^w$.
\end{proof}

\section{Further Results}\label{app:further}

One can reduce the bit complexity from \theoremref{theorem:main} further by
reducing the frequency at which clock estimates are updated. The loss in
accuracy however comes at the cost of increasing the time interval during which
input values are determined.
\begin{corollary}
Suppose ${\cal P}$ is a synchronous consensus protocol tolerating $f<n/3$
faults, runs for $R\in \operatorname{polylog}(n)$ rounds, and guarantees that no
correct node sends more than $B$ bits. For each $T\geq 2\vartheta d$, there is a
value $S\in \BO(T+R)$ and an algorithm with the following properties.
\begin{compactitem}
  \item Each correct node $v$ can initiate an instance of ${\cal P}$ at any time
  $t\geq S$, provided that it has not done so at any time $t'<t$ for which
  $H_v(t)-H_v(t')\leq T$.
  \item For any instance that terminates at a time larger than $S$, it holds
  that nodes determine their inputs according to their local view of the system
  during some interval $[t_1,t_1+\BO(T)]$, and terminate during some interval
  $[t_2,t_2+\BO(1)]$, where $t_2\in t_1+\Theta(T+R)$.
  \item If a correct node initiates an instance at time $t\geq S$, then $t_1=t$.
  \item Each instance for which $t_2\geq S$ satisfies termination, agreement,
  and validity.
  \item Each correct node sends at most $\BO((n^2\log n+RBn)/T)$ amortized bits
  per time unit.
  \item The above guarantees hold in the presence of $f$ faulty nodes and
  for arbitrary initial states.
\end{compactitem}
\end{corollary}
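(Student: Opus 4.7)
The strategy is to reuse the architecture of Algorithms~\ref{algo:clock}--\ref{algo:run} essentially verbatim, but to slow the clock-distribution subroutine (\algref{algo:clock_stab}) down by a factor $\Theta(T/d)$. Concretely, instead of having each correct node broadcast an \textnormal{update} message every $2\vartheta d$ local time, I would let it broadcast only every $T'$ local time for some $T'\in \Theta(T)$, and rescale the acceptance windows in Lines~\ref{line:too_slow}, \ref{line:too_fast}, and \ref{line:check_support} of \algref{algo:clock} (and correspondingly in \algref{algo:clock_stab}) so that the expected spacing of successive updates is $\Theta(T)$ and each update advances the tracked clock value by $T'$ rather than by $2\vartheta d$. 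This immediately brings the amortized cost of the clock subroutine down from $\BO(n^2\log n)$ to $\BO(n^2\log n/T)$ bits per node per time unit, matching the first term in the new bit bound.

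Next I would re-derive the two key clock-tracking lemmas. The analog of \lemmaref{lemma:clock_good} (resp.\ \corollaryref{coro:clock_good}) gives $H_w(t)\geq M_{vww}(t)\geq H_w(t)-\BO(T)$ for $v,w\in G$, since between consecutive received updates a correct node's tracked estimate lags by $\BO(T)$ rather than $\BO(d)$. The analog of \lemmaref{lemma:clock_bad} (resp.\ \corollaryref{coro:clock_bad}) then yields $M_{wuu}(t_w)-M_{vuu}(t_v)\in [2(t_w-t_v)/(2\vartheta+3)-\BO(T),\,2\vartheta(t_w-t_v)+\BO(T)]$ whenever $v,w\in G$ both currently trust $u\in V$. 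The proofs transfer almost symbolically; the only place that requires attention is that the ``consecutive update from the witness $g$'' argument now bounds $|M_{guu}(t_s^v)-M_{guu}(t_s^w)|$ by $\BO(T)$, which is why the additive slack grows from $\BO(d)$ to $\BO(T)$.

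With these weakened clock guarantees, \algref{algo:initiate} is adapted by enlarging the echo-acceptance threshold to $\Delta\in\Theta(T)$, enlarging the tolerance $3\vartheta d$ in \lineref{line:echo} to $\Theta(T)$, and enlarging each timeout $E_v^w(H_w)$ from $2\vartheta d$ to $\Theta(T)$. Then \lemmaref{lemma:initiate_good} still gives that a correct initiator is joined by every $w\in G$ at some $t_w\in [t+2d,t+\BO(T)]$ with input $i_w(v,H_v(t),t_w)$, accounting for $t_1=t$ and the $\BO(T)$-wide input window. \corollaryref{coro:echo} becomes $|t_v-t_w|\in \BO(T)$, and the proof of \lemmaref{lemma:synchronous_start} goes through unchanged in form, showing that whenever any correct node participates with nonzero input all of $G$ joins within a window of length $\BO(T)$. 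The running-consensus layer (\algref{algo:run}) is unaffected except that the constant $C$ and the timeout $2\vartheta d$ for progressing rounds need only remain $\BO(d)$; as each simulated round still costs $\BO(d)$ real time, the total running time of any nonzero instance is $\BO(T+R)$, giving $t_2\in t_1+\Theta(T+R)$ and $t_2-t_1$ within $\BO(1)$ across $G$ as stated.

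Finally, the self-stabilization and communication-bounding arguments of \sectionref{sec:self-stab} port over verbatim after substituting $\BO(T)$ wherever $\BO(d)$ appeared in reasoning about the clock layer: stabilization of the clock subroutine now takes $\BO(R+T)$ time (absorbed in $S$), the per-label memory-wipe period must be lengthened to match the new $\vartheta^2\tau R$ with $\tau\in\BO(T)$, and the spacing rule between initiations by the same node ($T$ local time apart) still caps consensus-related traffic at $\BO(RBn/T)$ amortized bits per time unit per node. The main technical obstacle is the second paragraph above: I need the faulty-clock tracking bound to degrade only additively (to $\BO(T)$) and not multiplicatively, which requires that the rescaled timing windows in the consistency checks remain tight enough that a single surviving correct witness $g$ still forces faulty clock reports to advance at a comparable rate at any two correct observers. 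Once that analog is established, every later lemma degrades in a completely mechanical way, and the stated parameters follow.
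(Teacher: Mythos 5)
Your overall route is the same as the paper's: rescale the clock-distribution and initiation layers to granularity $\Theta(T)$ (the paper phrases this as ``apply the reasoning for $d'\in\Theta(T)$''), keep \algref{algo:run} progressing at one simulated round per $\Theta(d)$ time, re-derive the clock lemmas with $\BO(T)$ additive slack, enlarge the trust-regaining timeouts and the clock modulus to $\Theta(T+R)$, and account for the bit complexity exactly as you do. So the decomposition, the key lemmas, and the bookkeeping all match.

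There is, however, one genuine error: your claim that the constant $C$ in \algref{algo:run} ``need only remain $\BO(d)$.'' It must be $\Theta(T)$ (the paper makes this explicit in a footnote). The role of $C$ in \lemmaref{lemma:consensus_correct} is to guarantee that every correct node has joined the instance, fixed its input, and set $H_w^{(1)}$ \emph{before} the earliest time $t^{(1)}$ at which any correct node computes and sends its round-$1$ messages. In your modified setting the joining window produced by \lemmaref{lemma:initiate_good} and \lemmaref{lemma:synchronous_start} has width $\Theta(T)$, not $\BO(d)$, because the echo tolerances, the clock-estimate accuracy, and the timeouts $E_v^w$ are all $\Theta(T)$. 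With $C\in\BO(d)$, an early joiner would broadcast its round-$1$ messages up to $\Theta(T)$ before a late correct node participates; \algref{algo:run} only stores messages for an instance the node is already running (\lineref{line:receive}), and the late node has not even determined its input yet, so those messages are lost and the execution can no longer be mapped to a synchronous run of ${\cal P}_{\!s}$ — exactly the property on which agreement and validity (and hence the fourth bullet of the statement) rest. Note also that the $T$ term in $t_2\in t_1+\Theta(T+R)$ comes precisely from $C\in\Theta(T)$; everything else in your argument then goes through as you describe.
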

\begin{proof}
We apply our reasoning for $d'\in\Theta(T)$, except that \algref{algo:run} still
progresses at one simulated round within $\Theta(d)$ time.\footnote{Note that we
have to set $C\in \Theta(d')=\Theta(T)$, though.} In other words, nodes send
clock updates every $\Theta(T)$ time, implying that the clock estimates are
accurate up to $\Theta(T)$, and instances of \algref{algo:run} are joined within
a time window of $\Theta(T)$ by correct nodes. \algref{algo:run} thus terminates
within $\BO(C+R)=\BO(T+R)$ rounds, so we can choose the timeouts for regaining
trust in \algref{algo:clock_stab} and the maximal clock value in $\Theta(T+R)$
as well; this ensures that the stabilization time remains $\BO(T+R)$.

With these modifications, we have a bit complexity of $\BO(n^2\log n)$ per node
and $T$ time for Algorithms~\ref{algo:clock} and~\ref{algo:initiate}. The bound
of $\BO((n^2\log n+Bn)/T)$ amortized bits per node and time unit for
\algref{algo:run} holds as before, resulting in a total of $\BO((n^2\log
n+Bn)/T)$ bits per node and time unit for the compound algorithm.
\end{proof}

Since our framework is deterministic, it can operate in any adversarial model.
What is more, we make use of the agreement and validity properties of ${\cal P}$
only in executions simulating a synchronous execution of the protocol in which
all nodes participate. This happens only polynomially often in $n$. Hence, we
can also plug randomized consensus algorithms in our framework that satisfy
agreement and validity w.h.p.\ only. A randomized consensus protocol terminating
within $R$ rounds satisfies the following properties.
\begin{compactitem}
\item[\textbf{Termination:}] Every correct node terminates within $R$ rounds
and outputs a value $o(v)\in I$.
\item[\textbf{Agreement:}] With high probability, $o(v)=o(w)$ for correct nodes
$v,w$.
\item[\textbf{Validity:}] If $i(v)=i(w)$ for all correct $v,w$, with high
probability this is also the output value.
\end{compactitem}
Note that, typically, agreement and validity are required to hold
deterministically, while termination is only satisfied probabilistically.
It is simple to translate such an algorithm in one that satisfies the above
criteria by forcing termination after $R$ rounds, where $R$ is sufficiently
large to guarantee termination w.h.p.\footnote{Frequently, running time bounds
are shown to hold in expectation only. To the best of our knowledge, in all
these cases an additional factor of $\BO(\log n)$ is sufficient to obtain a
bound that holds w.h.p.} For suitable randomized algorithms, the following
corollary is immediate.
\begin{corollary}
Suppose ${\cal P}$ is a synchronous randomized consensus protocol tolerating
$f<n/3$ faults that terminates in $R\in \operatorname{polylog}(n)$ rounds and
guarantees that no correct node sends more than $B$ bits w.h.p. Then there is a
value $S\in \BO(R)$ and an algorithm with the following properties.
\begin{compactitem}
  \item Each correct node $v$ can initiate an instance of ${\cal P}$ at any time
  $t\geq S$, provided that it has not done so at any time $t'<t$ for which
  $H_v(t)-H_v(t')\leq R$.
  \item For any instance that terminates at a time larger than $S$, it holds
  that nodes determine their inputs according to their local view of the system
  during some interval $[t_1,t_1+\BO(R)]$, and terminate during some interval
  $[t_2,t_2+\BO(1)]$, where $t_2\in t_1+\Theta(R)$.
  \item If a correct node initiates an instance at time $t\geq S$, then $t_1=t$.
  \item Each instance for which $t_2\geq S$ satisfies agreement and validity
  w.h.p.\footnote{This statement holds \emph{per instance};
  during superpolynomially large time intervals, some instances may fail.}
  \item Each correct node sends at most $\BO(n^2\log n+Bn)$ bits within $R$
  time.
  \item The above guarantees hold in the presence of $f$ faulty nodes and
  for arbitrary initial states.
\end{compactitem}
\end{corollary}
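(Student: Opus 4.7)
The plan is to apply the preceding corollary of Section 6 with parameter $T=R$. Plugging $T=R$ into its conclusions directly yields $S\in\BO(T+R)=\BO(R)$, an input-determination window $[t_1,t_1+\BO(T)]=[t_1,t_1+\BO(R)]$, termination times $t_2\in t_1+\Theta(T+R)=t_1+\Theta(R)$, and amortized bit complexity $\BO((n^2\log n+Bn)/T)=\BO((n^2\log n+Bn)/R)$ per node and time unit, i.e.\ $\BO(n^2\log n+Bn)$ bits within any $R$-time window. Thus every quantitative claim of the corollary matches mechanically; what remains is to address the adaptation from deterministic to randomized ${\cal P}$.

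First, I would replace ${\cal P}$ with a hard-truncated variant that unconditionally halts after $R$ rounds and after transmitting $B$ bits per node, outputting a default value (say $0$) if the original protocol would otherwise continue. By hypothesis this truncation affects only the $\le n^{-c}$ failure event of ${\cal P}$. The resulting protocol has \emph{deterministic} round count $R$ and per-node bit budget $B$, so \lemmaref{lemma:silent} applies and produces a silent variant ${\cal P}_{\!s}$ that inherits the w.h.p.\ agreement and validity guarantees of ${\cal P}$ and differs only by two additional deterministic rounds and $\BO(1)$ extra broadcasted bits per node. This ${\cal P}_{\!s}$ is then plugged into the framework exactly as in \corollaryref{coro:simulation} and the preceding corollary of Section 6.

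The key observation, already highlighted in the paragraph immediately above the statement, is that the deterministic machinery of Sections \ref{sec:algo}--\ref{sec:self-stab} never invokes agreement or validity of ${\cal P}_{\!s}$: clock distribution (\corollaryref{coro:clock_good}, \corollaryref{coro:clock_bad}), gradecast-style initiation (\lemmaref{lemma:initiate_good}, \lemmaref{lemma:synchronous_start}), the synchronous round simulation of \lemmaref{lemma:consensus_correct}, the silence-based handling of partial participation, the self-stabilization timeouts, and the rate-limiting/truncation rules that enforce the bit budget all rely only on the \emph{syntactic} bounds $R$ and $B$ of ${\cal P}_{\!s}$, which now hold deterministically by the truncation. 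Agreement and validity of a given instance are inherited solely from the single synchronous execution of ${\cal P}_{\!s}$ to which that instance is mapped by \theoremref{theorem:simulation}; hence each instance satisfies these properties with exactly the same (w.h.p.) probability as one invocation of ${\cal P}_{\!s}$. Since the statement is explicitly per-instance (cf.\ the footnote), no union bound over the superpolynomially many potential instances is needed.

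The main obstacle is making sure that the rate-limiting detection rules of \sectionref{sec:self-stab} — which locally abort instances exceeding the per-round bit count or the round count $R$ — do not spuriously spoil honest-looking instances when the underlying protocol is randomized. This is where the hard truncation earns its keep: after truncation, the bounds $R$ and $B$ hold deterministically, so an abort can occur only on instances that are either malformed (inconsistent memory, faulty initiator violating the clock-consistency checks) or lie in the $n^{-c}$ failure event of ${\cal P}$ itself. In both cases the per-instance probabilistic guarantee is unaffected, and the corollary follows.
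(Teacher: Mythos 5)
Your proposal is correct and follows essentially the same route as the paper's own (very terse) justification: the paper likewise obtains all quantitative bullets by instantiating the immediately preceding reduced-frequency corollary with $T$ of order $R$, forces termination of ${\cal P}$ after $R$ rounds so that the round and bit bounds hold deterministically, and observes that the deterministic framework invokes agreement and validity of ${\cal P}_{\!s}$ only in the single simulated synchronous execution an instance is mapped to, so the w.h.p.\ guarantees carry over per instance without any union bound. The only nit is a labeling slip: the corollary you apply is the one directly preceding the statement in \sectionref{app:further}, not in Section~6.
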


\section*{Acknowledgements}
This material is based upon work supported by the National Science Foundation
under Grant Nos.\ CCF-AF-0937274, CNS-1035199, 0939370-CCF and CCF-1217506, the
AFOSR under Contract No.\ AFOSR Award number FA9550-13-1-0042, the Swiss Society
of Friends of the Weizmann Institute of Science, the German Research Foundation
(DFG, reference number Le 3107/1-1), the Israeli Centers of Research Excellence
(I-CORE) program, (Center  No.\ 4/11), grant 3/9778 of the Israeli Ministry of
Science and Technology, and the Google Inter-university center for ``Electronic
Markets and Auctions''. Danny Dolev is Incumbent of the Berthold Badler Chair.

% \newpage

\bibliographystyle{abbrv}
\bibliography{../log_clock}

\end{document}